\theoremstyle{plain}
\newtheorem{theorem}{Theorem}
\newtheorem{lemma}[theorem]{Lemma}
\newtheorem{claim}[theorem]{Claim}
\newtheorem{definition}[theorem]{Definition}
\newtheorem{proposition}[theorem]{Proposition}
\theoremstyle{remark}
\newtheorem*{Remark}{Remark}
\newtheorem{remark}[theorem]{Remark}
\newcommand{\bit}{\begin{itemize}}
\newcommand{\eit}{\end{itemize}}
\newcommand{\ben}{\begin{enumerate}}
\newcommand{\een}{\end{enumerate}}
\newcommand{\be}{\begin{equation}}
\newcommand{\ee}{\end{equation}}
\newcommand{\ba}{\begin{array}}
\newcommand{\ea}{\end{array}}
\newcommand{\mc}[1]{\mathcal{#1}}
\newcommand{\supp}[1]{\mathrm{supp}\left(#1\right)}
\newcommand{\norm}[1]{\left|\left|#1\right|\right|}
\newcommand{\eps}{\varepsilon}
\newcommand{\inner}[2]{\left< #1,#2 \right>}
\newcommand\cH{\mathcal H}
\newcommand\cM{\mathcal M}
\newcommand\cP{\mathcal P}
\newcommand\cS{\mathcal S}
\newcommand\R{\mathbb R}
\newcommand\N{\mathbb N}
\newcommand\rng{\operatorname{rng}}
\newcommand\tr{\operatorname{Tr}}
\newcommand{\ler}[1]{\left( #1 \right)}
\begin{document}

\title[Maps preserving the quantum $\chi_\alpha^2$-divergence]{Maps on positive definite operators preserving the quantum $\chi_\alpha^2$-divergence}

\author[H. Y. Chen]{Hong-Yi Chen}
\address{Department of Applied Mathematics\\
National Sun Yat-sen University\\
Kaohsiung 80424, Taiwan.}
\email{hongyi0906@gmail.com}

\author[Gy. P. Geh\'er]{Gy\"orgy P\'al Geh\'er}
\address{MTA-SZTE Analysis and Stochastics Research Group,
Bolyai Institute, University of Szeged,
H-6720 Szeged, Aradi v\'ertan\'uk tere 1., Hungary
and
MTA-DE ``Lend\" ulet'' Functional Analysis Research Group, Institute of Mathematics\\
         University of Debrecen\\
         H-4002 Debrecen, P.O. Box 400, Hungary}
\email{gehergy@math.u-szeged.hu}
\urladdr{http://www.math.unideb.hu/\~{}gehergy}

\author[C. N. Liu]{Chih-Neng Liu}
\address{Department of Applied Mathematics\\
National Sun Yat-sen University\\
Kaohsiung 80424, Taiwan.}
\email{cnliu@mail.nsysu.edu.tw}

\author[L. Moln\'ar]{Lajos Moln\'ar}
\address{Bolyai Institute\\
University of Szeged\\
H-6720 Szeged, Aradi v\'ertan\'uk tere 1.,
Hungary and Institute of Mathematics\\
Budapest University of Technology and Economics\\
H-1521 Budapest, Hungary and
MTA-DE ``Lend\" ulet'' Functional Analysis Research Group, Institute of Mathematics\\
         University of Debrecen\\
         H-4002 Debrecen, P.O. Box 400, Hungary}
\email{molnarl@math.u-szeged.hu}
\urladdr{http://www.math.u-szeged.hu/\~{}molnarl}

\author[D. Virosztek]{D\'aniel Virosztek}
\address{Institute of Mathematics\\
Budapest University of Technology and Economics\\
H-1521 Budapest, Hungary
and
MTA-DE ``Lend\" ulet'' Functional Analysis Research Group, Institute of Mathematics\\
         University of Debrecen\\
         H-4002 Debrecen, P.O. Box 400, Hungary}
\email{virosz@math.bme.hu}
\urladdr{http://www.math.bme.hu/\~{}virosz}

\author[N. C. Wong]{Ngai-Ching Wong}
\address{Department of Applied Mathematics\\
National Sun Yat-sen University\\
Kaohsiung 80424, Taiwan.}
\email{wong@math.nsysu.edu.tw}
\urladdr{http://www.math.nsysu.edu.tw/~wong}

\thanks{
Gy. P. Geh\'er and L. Moln\'ar were supported by the ``Lend\"ulet'' Program (LP2012-46/2012) of the Hungarian Academy of Sciences and by the National Research, Development and Innovation Office – NKFIH, Grant No. K115383.
D. Virosztek was supported by the ``Lend\"ulet'' Program (LP2012-46/2012) of the Hungarian Academy of Sciences, by the National Research, Development and Innovation Office – NKFIH, Grant No. K104206, and by the ``For the Young Talents of the Nation'' scholarship program (NTP-EF\"O-P-15-0481) of the Hungarian State. The project was also supported by the joint venture of Taiwan and Hungary MOST-HAS, Grant No. 104-2911-1-110-508.}

%\dedicatory{}

\keywords{Positive definite operators, quantum $\chi_\alpha^2$-divergence, preservers}
\subjclass[2010]{Primary: 46N50, 47B49.}

\begin{abstract}
We describe the structure of all bijective maps on the cone of positive definite operators acting on a finite and at least two-dimensional complex Hilbert space which preserve the quantum $\chi_\alpha^2$-divergence for some $\alpha \in [0,1]$. We prove that any such transformation is necessarily implemented by either a unitary or an antiunitary operator. Similar results concerning maps on the cone of positive semidefinite operators as well as on the set of all density operators are also derived.
\end{abstract}
\maketitle
\section{Introduction}

The study of automorphisms, symmetries or, more generally, maps on mathematical structures which preserve relevant characteristics (numerical or nonnumerical) of the underlying structures is an important general task in most areas of mathematics and its applications, hence in mathematical physics, too. In the latter discipline  one of the most fundamental corresponding result is Wigner's celebrated theorem on the structure of so-called quantum mechanical symmetry transformations. These transformations are bijective maps on the set of all rank-one projections on a complex Hilbert space (representing the pure states of a quantum system) that preserve the quantity of transition probability which is the trace of the product of rank-one projections. Wigner's theorem states that any such map is implemented by a unitary or antiunitary operator on the underlying Hilbert space. Motivated by this very important result, in a series of papers we presented several results in which we determined the structures of transformations on the sets of density operators or  positive (definite or semidefinite) operators that preserve certain kinds of quantum divergence. Below we list those results of ours which are in close connections to the present investigations. 

In order to do this, let us first fix the notation. In what follows $\mc{H}$ stands for a finite and at least two-dimensional complex Hilbert space, $d=\dim \mc{H}$, and we denote by $\mc{L}(\mc{H})$  the set of all linear operators on $\mc{H}.$  The symbols $\mc{L}^{sa}(\mc{H}), \mc{L}^{+}(\mc{H})$ and $\mc{L}^{++}(\mc{H})$ stand for the collections of all selfadjoint, positive semidefinite, and positive definite operators on $\cH,$ respectively. The linear space $\mc{L}(\mc{H})$ is endowed with the Hilbert-Schmidt inner product $\inner{X}{Y}_{HS}=\tr X Y^{*}$, $X,Y\in \mc{L}(\mc{H})$, and $\norm{.}_{HS}$ denotes the induced norm. We will also consider the operator norm on $\mc{L}(\mc{H})$ which is denoted by $\norm{.}_{op}.$ The symbol $\cS\ler{\cH}$ stands for the set of all density operators on $\cH$, i.e., the set of all elements in $\mc{L}^{+}(\mc{H})$ with unit trace. The elements of $\cS\ler{\cH}$ represent the quantum states of the quantum system described by the Hilbert space $\cH$, hence $\cS\ler{\cH}$ is also called state space. The set of all nonsingular (i.e., invertible) elements of $\cS\ler{\cH}$ is denoted by $\cM\ler{\cH}$ and $\cP_1(\cH)$ stands for the set of all rank-one projections on $\cH.$

If $f: \, I \rightarrow \R$ is a function defined on an interval $I \subset \R$, then the corresponding \emph{standard operator function} is the map
$$
f: \{A \in \mc{L}^{sa}(\mc{H}): \ \sigma(A) \subseteq I \} \rightarrow \mc{L}(\mc{H})
$$
$$
A=\sum_{a \in \sigma(A)} a P_a \longmapsto f(A):=\sum_{a \in \sigma(A)} f(a) P_a,
$$
where $\sigma(A)$ is the spectrum of $A$ and $P_a$ is the spectral projection corresponding to the eigenvalue $a$ of $A$.
\par
Wigner's above mentioned fundamental theorem states that any bijective map
$\phi:\cP_1(\cH)\to \cP_1(\cH)$ which has the property that
\[
\tr \phi(P)\phi(Q)=\tr PQ \qquad (P,Q\in \cP_1(\cH))
\]
is necessarily of the form
\begin{equation}\label{E:Wig}
\phi(P)=UPU^* \qquad \ler{P\in \cP_1(\cH)}
\end{equation}
with some either unitary or antiunitary operator $U$ on $\cH$. (There is a vast literature on this celebrated result, we only refer to Sections 0.3, 2.1 and 2.2 in the monograph \cite{MB} and to the recent elementary proof given in \cite{gyuri}.)

And now a short summary of our former and relating results follows. We begin with noting that divergences, in particular, relative entropy type quantities are usually defined on the state space or on the cones of positive definite or semidefinite operators depending on the nature of the problem one considers. Therefore, we investigated  the corresponding preserver transformations on all those structures. Obviously, the machinery we used in our arguments to obtain the results heavily depended on which particular structures the maps were defined. 
  
In the paper \cite{molszok} we proved that those (a priori nonbijective) maps on the state space $\cS\ler{\cH}$ which preserve the (Umegaki) relative entropy have the structure like in Wigner's theorem \eqref{E:Wig}, they are all implemented by unitary or antiunitary operators. 
Next, in \cite{mnsz13} we presented a far reaching generalization of the result in \cite{molszok} by showing  that all maps on $\mc{S}(\mc{H})$ which preserve a so-called $f$-divergence ($f$ being an arbitrary strictly convex real function on the set of nonnegative real numbers) are also unitary or antiunitary similarity transformations.
In \cite{dv16b} the same conclusion was obtained for the same kind of preservers which are bijective and defined not on the state space but on the whole set $\mc{L}^+(\mc{H})$ of positive semidefinite operators. (We also remark that in the very recent paper \cite{ML16g} we have made some steps towards the description of quasi-entropy preservers on positive definite cones in the setting of $C^*$-algebras but the level of generality of the considered quasi-entropies falls far from what we could consider sufficient.)  

In \cite{mpv15} we described the structure of all bijective maps on the positive definite cone $\mc{L}^{++}(\mc{H})$ which preserve the Bregman divergence corresponding to any differentiable convex function on the positive reals with derivative bounded from below and unbounded from above. In addition, we considered the cases of the particular functions $x\mapsto x\log x$, $x>0$ (von Neumann divergence, in other words, Umegaki's relative entropy) and that of $x\mapsto -\log x$ (Stein's loss). In the former cases the preservers are all unitary-antiunitary conjugations while in the latter one they are conjugations by any invertible linear or conjugate-linear operators on $\mc{H}$. In the same paper we obtained results of similar spirit concerning maps on $\mc{L}^{++}(\mc{H})$ preserving Jensen divergence. Similar investigation was carried out in \cite{dv16a} for bijective transformations on the state space preserving Bregman or Jensen divergences.

In the present paper we consider a relatively new and important notion of quantum divergence and determine its preservers. We emphasize in advance that the problem has been a real challenge, the formerly developed techniques have needed to be altered significantly and many new ideas have necessarily had to be brought in. Now, the basic concept of the present paper is the following notion of quantum divergence which was introduced in \cite{Ruskaietal}, see equation (7) on page 122201-3. 

\begin{definition}\label{D:1}
Let $\alpha \in [0,1].$ The quantum $\chi_\alpha^2$-divergence of the operators $A \in \mc{L}^+(\cH)$, $B\in \mc{L}^{++}(\cH)$ is defined by
\begin{equation*} %\label{def0}
K_\alpha(A||B):=\tr B^{-\alpha}(A-B)B^{\alpha-1}(A-B).
\end{equation*}
Clearly, we can also write this as
\begin{equation*}\label{E:M1}
K_\alpha(A||B)=\tr B^{-\alpha}AB^{\alpha-1}A-2\tr A+\tr B.
\end{equation*}
For a singular $B\in \mc{L}^+(\cH)$ we define
\begin{equation} \label{def1}
K_\alpha(A||B):=\lim_{\eps\to 0} K_\alpha(A||B+\eps I).
\end{equation}
\end{definition}

\begin{remark}\label{R:M}
In relation with the above definition we make a few comments. First, concerning the existence of the limit in \eqref{def1} observe the following. In the case of a singular $B\in \mc{L}^+(\cH)$
one can easily see that if $\supp{A} \subseteq \supp{B}$ ($\supp{B}$ denoting the support of $B$ which is the orthogonal complement of its kernel hence equals the range $\rng{(B)}$ of $B$), then we have $K_\alpha(A||B)=\tr B^{-\alpha}(A-B)B^{\alpha-1}(A-B)$, where the trace is taken over the subspace $\supp{B}$ of $\mc{H}$. 

If $\supp{A} \nsubseteq \supp{B}$, then we have $K_\alpha(A||B)=\infty$. Indeed, assume that the sequence $\{\tr (B+\epsilon_n I)^{-\alpha}A(B+\epsilon_n I)^{\alpha-1}A\}_{n \in \N}$ is bounded for some sequence $\{\epsilon_n\}_{n \in \N}$ of positive numbers converging to zero. Since we have 
\[
\tr (B+\epsilon_n I)^{-\alpha}A(B+\epsilon_n I)^{\alpha-1}A=\tr \left|(B+\epsilon_n I)^\frac{\alpha -1}{2}A(B+\epsilon_n I)^\frac{-\alpha}{2}\right|^2,
\]
this yields that $\{(B+\epsilon_n I)^\frac{\alpha -1}{2}A(B+\epsilon_n I)^\frac{-\alpha}{2}\}_{n \in \N}$ is a bounded sequence in the Hilbert-Schmidt norm and hence it has a convergent subsequence. Without serious loss of generality we may and do assume that already the original sequence itself converges 
\[
(B+\epsilon_n I)^\frac{\alpha -1}{2}A(B+\epsilon_n I)^\frac{-\alpha}{2} \to C.
\]
Since
\[
(B+\epsilon_n I)^\frac{1-\alpha}{2}\to B^\frac{1-\alpha}{2}, \quad  (B+\epsilon_n I)^\frac{\alpha}{2}\to B^\frac{\alpha}{2},
\]
it immediately follows that
\[
A=B^\frac{1-\alpha}{2} C B^\frac{\alpha}{2}.
\]
But this implies $\rng{(A)}\subset \rng{(B)}$, a contradiction. Therefore, we have
\begin{equation*} \label{E:M2}
K_\alpha(A||B)= \left.
  \begin{cases}
    \tr B^{-\alpha}(A-B)B^{\alpha-1}(A-B), & \text{if } \supp{A} \subseteq \supp{B} \\
    \infty , & \text{otherwise.}
  \end{cases}
  \right.
\end{equation*}  

Since
\[
\tr B^{-\alpha}(A-B)B^{\alpha-1}(A-B)= \tr \left|B^\frac{\alpha -1}{2}(A-B)B^\frac{-\alpha}{2}\right|^2,
\]
it follows easily that $K_\alpha(A||B)\geq 0$ for any $A,B \in \mc{L}^+(\cH)$ and $K_\alpha(A||B)=0$ holds if and only if $A=B.$ Therefore, the $\chi_\alpha^2$-divergence is always nonnegative and take the value 0 only at identical operators. This means that $K_\alpha(.||.)$ is really a divergence or, in other words, a generalized distance measure.

We also note that in the special case where $\alpha \in \{0,1\}$, the $\chi_\alpha^2$-divergence coincides with the so-called \emph{quadratic relative entropy.}
The transformations of the state space $\cS(\cH)$ and those of the set $\cM\ler{\cH}$ of all nonsingular density operators leaving the quadratic relative entropy invariant have been determined in \cite{molnagy}, see Theorems 2 and 3.

In the main result of this paper we show that all bijective maps of the positive definite cone $\mc{L}^{++}(\cH)$ which preserve the $\chi_\alpha^2$-divergence for some $\alpha\in [0,1]$ are unitary or antiunitary similarity transformations.
\end{remark}

We remark that in \cite{Ruskaietal} an even more general concept of $\chi^2$-divergence depending on a function parameter was also defined in the manner of $f$-divergences, see equation (10) on page 122201-3. Important properties of these notions (both the more restricted one given in Definition~\ref{D:1} as well as the just mentioned more general one) were investigated in several further papers. Without presuming to be exhaustive here we refer only to the works \cite{Hansen}, \cite{HP}, \cite{Jencova}, \cite{PG}, \cite{Temmeetal}.  

Before presenting our results we would like to make clear a point.
In the light of our structural results given in \cite{mnsz13} and \cite{dv16b} concerning maps preserving $f$-divergences, one may immediately put the question that what about the preservers of the general notion of $\chi^2$-divergence. The honest answer is that we do not know. As the reader will see below, compared to the above listed previous results of ours, the description even in the considered case of $\chi^2_\alpha$-divergences is remarkably more difficult and more complicated requiring the invention of many new ideas. Presently, we do not see any ways how one can attack the general problem.

\section{The main results}

In this section we present the main results of the paper.
Select an arbitrary number $\alpha\in[0,1]$. It is clear that for any unitary or antiunitary operator $U$ on $\mc{H}$, the corresponding conjugation 
\[
A\mapsto UAU^* \qquad \ler{A\in  \mc{L}^{+}(\cH)}
\]
leaves the quantum $\chi_\alpha^2$-divergence invariant.
In our results we show that the converse statement is also true, i.e., the preservers of the quantum $\chi_\alpha^2$-divergence are all necessarily unitary or antiunitary conjugations.

The precise formulations of the statements read as follows. We begin with the case of the positive definite cone.

\begin{theorem} \label{fo}
Let $\alpha \in [0,1]$ be an arbitrary but fixed number and let $\phi: \mc{L}^{++}(\cH) \rightarrow \mc{L}^{++}(\cH)$ be a bijection which preserves the quantum $\chi_\alpha^2$-divergence, that is, satisfies
$$
K_\alpha(\phi(A)||\phi(B))=K_\alpha(A||B) \qquad \ler{A, B \in \mc{L}^{++}(\cH)}.
$$
Then there exists a unitary or an antiunitary operator $U: \cH \rightarrow \cH$ such that
$$
\phi(A)=UAU^* \qquad \ler{A \in \mc{L}^{++}(\cH)}.
$$
\end{theorem}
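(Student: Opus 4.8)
The plan is to turn the hypothesis into a differential-geometric rigidity statement and then to break the rigidity using the algebraic structure of the divergence. Write $Q_B(X):=\tr B^{-\alpha}XB^{\alpha-1}X$ for $X\in\mc{L}^{sa}(\cH)$; by the computation in the Remark this is a positive definite quadratic form, and by the Definition one has $K_\alpha(A||B)=Q_B(A-B)$, so the preservation hypothesis reads
\[
Q_{\phi(B)}\big(\phi(A)-\phi(B)\big)=Q_B(A-B)\qquad(A,B\in\mc{L}^{++}(\cH)).
\]
The first, easy, observation is that $\phi$ is a homeomorphism: positive definiteness of $Q_B$ makes $K_\alpha(A_n||B)\to 0$ equivalent to $A_n\to B$ in $\norm{\cdot}_{HS}$, and since $\phi^{-1}$ also preserves $K_\alpha$, both $\phi$ and $\phi^{-1}$ are continuous.

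The central idea is to promote $\phi$ to a smooth isometry. The forms $Q_B$ define a Riemannian metric $g$ on the open cone, $g_B(X,X)=Q_B(X)$, with $g_I=\norm{\cdot}_{HS}^2$. For nearby points $\sqrt{K_\alpha(A||B)}=\sqrt{g_B(A-B,A-B)}$ is asymptotic to the geodesic distance $d_g(A,B)$, and along any smooth path $\gamma$ the chordal sums $\sum_i\sqrt{K_\alpha(\gamma(t_{i+1})||\gamma(t_i))}$ converge, under refinement, to the $g$-length of $\gamma$. Since $\phi$ preserves every chordal term and, being a homeomorphism, carries partitions of $\gamma$ to partitions of $\phi\circ\gamma$, it preserves $g$-length of paths and hence the length metric $d_g$; Myers--Steenrod then shows $\phi$ is a smooth Riemannian isometry. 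Differentiating the displayed identity at $A=B$ gives, for $T_B:=D\phi_B$, the pointwise condition
\[
Q_{\phi(B)}(T_BX)=Q_B(X)\qquad(X\in\mc{L}^{sa}(\cH)),
\]
so each $T_B$ is a linear isometry $(\mc{L}^{sa}(\cH),g_B)\to(\mc{L}^{sa}(\cH),g_{\phi(B)})$.

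Next I would normalize. The scalar operators are intrinsically distinguished as the points of maximal isotropy of $g$ (at $\lambda I$ every unitary conjugation fixes the point, whereas at a non-scalar $B$ only unitaries commuting with $B$ do), so $\phi$ maps the scalar ray onto itself; writing $\phi(sI)=f(s)I$ and feeding $K_\alpha(sI||tI)=d\,(s-t)^2/t$ into the preservation identity forces $f=\mathrm{id}$, whence $\phi(I)=I$ and $T_I$ is a genuine $\norm{\cdot}_{HS}$-orthogonal map of $\mc{L}^{sa}(\cH)$. At this point the endgame is clean in principle: a Riemannian isometry is determined by its $1$-jet at one point, so if I can show $T_I=\mathrm{Ad}_U:X\mapsto UXU^*$ for some unitary or antiunitary $U$, then $\phi=\mathrm{Ad}_U$ globally, since both sides are isometries with the same value and derivative at $I$.

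The genuine difficulty is therefore the rigidity at the linear level: identifying the orthogonal map $T_I$ as a unitary or antiunitary conjugation. Here the explicit $\alpha$-structure of $g$ must be exploited. On the flat through a fixed maximal abelian subalgebra the metric is $\sum_i \mathrm{d}x_i^2/x_i$ for every $\alpha$ (hence flat, with explicit geodesics), while an off-diagonal direction in the eigenbasis of a diagonal $B$ carries the $\alpha$-dependent weight $x_i^{-\alpha}x_j^{\alpha-1}+x_i^{\alpha-1}x_j^{-\alpha}$; it is this non-commutative part that should force $\phi$ to send commuting families to commuting families and, combined with the orthogonality of $T_I$, to make $T_I$ a Jordan automorphism of $\mc{L}^{sa}(\cH)$. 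The classical description of such Jordan maps (equivalently a Wigner-type argument) then yields the desired $U$, and the separation of the unitary from the antiunitary alternative comes precisely from the orientation data of $T_I$ on the off-diagonal blocks. I expect this last classification---making rigorous that the field of pointwise isometries is generated by a single unitary or antiunitary, together with the smoothing step for an a priori merely continuous $\phi$---to be the main obstacle.
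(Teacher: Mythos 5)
Your proposal does not close, and the two places where it is open are precisely where the mathematical content of the theorem lives. First, the endgame you yourself flag as ``the main obstacle'' is left as a hope: you never prove that the differential $T_I$, an $\norm{\cdot}_{HS}$-orthogonal map of $\mc{L}^{sa}(\cH)$ intertwining the weighted forms $Q_B$ and $Q_{\phi(B)}$, is a Jordan automorphism, and the phrase ``should force $\phi$ to send commuting families to commuting families'' is not an argument. This step is genuinely hard: the paper's proof has to extract exactly this rigidity, and does so by a long chain of spectral arguments --- the factorization $K_\alpha^*(R||D)=\tr RD^{-\alpha}\cdot\tr RD^{\alpha-1}$ for rank-one projections $R$, the min/max characterizations of eigenprojections, preservation of orthogonality and then of transition probabilities, Wigner's theorem, an eigenvalue-by-eigenvalue identification, plus a separate linearization argument to show the unitaries obtained on different trace levels agree --- none of which has a substitute in your sketch. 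Second, your normalization step is also unproved: the claim that scalar operators are ``intrinsically distinguished as points of maximal isotropy'' compares isotropy subgroups of the full isometry group of $g$, which you do not know. What you actually know is that the isotropy group at $\phi(\lambda I)$ contains the $\phi$-conjugate of the unitary conjugations $\mathrm{Ad}_U$; to conclude $\phi(\lambda I)$ is scalar you must exclude that some non-scalar point has an isotropy group at least that large, and ruling this out essentially amounts to computing the isometry group of $(\mc{L}^{++}(\cH),g)$ --- i.e., to solving a problem of the same difficulty as the theorem itself. (Your functional-equation step $f(s)=s$ on the scalar ray is fine once the ray is known to be preserved, but that premise is the issue.)

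The geometric frame itself is sound and could in principle be made rigorous up to the point you reach: $K_\alpha(A||B)=Q_B(A-B)$ with $Q_B$ positive definite is correct, $g$ is a smooth metric, and $\phi$ does preserve the induced length metric --- though your chordal-sum argument along smooth paths has a repairable flaw, since the image path $\phi\circ\gamma$ is a priori only continuous and you cannot pass from preserved chordal sums to its $g$-length directly; one should instead use the intrinsic chain metric $d_\eps(A,B)=\inf\sum_i\sqrt{K_\alpha(x_{i+1}||x_i)}$ over $\eps$-fine chains and let $\eps\to 0$, which is manifestly preserved by any $K_\alpha$-preserving bijection and coincides with $d_g$ by uniform local comparability of $\sqrt{Q_B(A-B)}$ with geodesic distance (mind also that $\sqrt{K_\alpha}$ is asymmetric, symmetric only to leading order). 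Myers--Steenrod and the determination of an isometry by its $1$-jet on a connected manifold are then legitimately available. But after these repairs you have only reduced the theorem to the two open claims above, whereas the paper's actual proof proceeds along an entirely different and self-contained route: extension of $\phi$ to $\mc{L}^{+}(\cH)$ by continuity, trace preservation via the identity $\inf_X\left( K_\alpha(X||B)-K_\alpha(X||C)\right)=\tr B-\tr C$ for $B\le C$, restriction to the state space, and the Wigner-theorem machinery described above. As it stands, your proposal is a program, not a proof.
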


The theorem will be proven in a separate section.
We next formulate the corresponding results concerning the cone of positive semidefinit operators and the state space.

\begin{proposition} \label{semidef}
Let $\alpha \in [0,1]$ be an arbitrary but fixed number and let $\phi: \mc{L}^{+}(\cH) \rightarrow \mc{L}^{+}(\cH)$ be a bijection such that
$$
K_\alpha(\phi(A)||\phi(B))=K_\alpha(A||B) \qquad \ler{A, B \in \mc{L}^{+}(\cH)}.
$$
Then there is a unitary or an antiunitary operator $U: \cH \rightarrow \cH$ such that
$$
\phi(A)=UAU^* \qquad \ler{A \in \mc{L}^{+}(\cH)}.
$$
\end{proposition}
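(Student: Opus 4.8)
\emph{The plan} is to reduce the statement to the positive definite case settled in Theorem~\ref{fo}. The first step is to show that $\phi$ maps $\mc{L}^{++}(\cH)$ bijectively onto itself. By Remark~\ref{R:M}, for $A\in\mc{L}^+(\cH)$ and $B\in\mc{L}^+(\cH)$ one has $K_\alpha(A||B)<\infty$ exactly when $\supp{A}\subseteq\supp{B}$; since invertibility of $B$ means $\supp{B}=\cH$, it follows that $B\in\mc{L}^{++}(\cH)$ if and only if $K_\alpha(A||B)<\infty$ for every $A\in\mc{L}^+(\cH)$. This is a property expressed purely through $K_\alpha$, and both $\phi$ and $\phi^{-1}$ preserve $K_\alpha$; hence $\phi$ restricts to a $K_\alpha$-preserving bijection of $\mc{L}^{++}(\cH)$. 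Theorem~\ref{fo} then yields a unitary or antiunitary $U$ on $\cH$ with $\phi(A)=UAU^*$ for all $A\in\mc{L}^{++}(\cH)$. Replacing $\phi$ by the map $A\mapsto U^*\phi(A)U$ — again a $K_\alpha$-preserving bijection of $\mc{L}^+(\cH)$ — I may assume $\phi$ is the identity on $\mc{L}^{++}(\cH)$, and it remains only to show that $\phi$ fixes every singular element of $\mc{L}^+(\cH)$.

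\emph{The second step} fixes a singular $A\in\mc{L}^+(\cH)$ and puts $A'=\phi(A)$. For each $B\in\mc{L}^{++}(\cH)$ we have $\phi(B)=B$, so invariance gives $K_\alpha(A'||B)=K_\alpha(A||B)$; as $B$ is invertible both sides are finite and given by the explicit formula of Definition~\ref{D:1}, so after cancelling the common $\tr B$ we obtain $\tr B^{-\alpha}AB^{\alpha-1}A-2\tr A=\tr B^{-\alpha}A'B^{\alpha-1}A'-2\tr A'$ for every $B\in\mc{L}^{++}(\cH)$. Substituting $tB$ for $B$ and using the homogeneity $\tr (tB)^{-\alpha}A(tB)^{\alpha-1}A=t^{-1}\tr B^{-\alpha}AB^{\alpha-1}A$, one clears $t$ to get a relation affine in $t>0$; matching the coefficients of $t^0$ and $t^1$ decouples the data and produces $\tr A=\tr A'$ together with the purely quadratic identity $\tr B^{-\alpha}AB^{\alpha-1}A=\tr B^{-\alpha}A'B^{\alpha-1}A'$ valid for all $B\in\mc{L}^{++}(\cH)$.

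\emph{The heart of the proof} is to recover $A$ from this quadratic data. For $\alpha\in(0,1)$, diagonalize $B$ in an orthonormal basis $\{e_i\}$ with eigenvalues $b_i>0$; then $\tr B^{-\alpha}AB^{\alpha-1}A=\sum_{i,j}b_i^{-\alpha}b_j^{\alpha-1}\abs{\inner{e_i}{Ae_j}}^2$, a linear combination of the monomials $b_i^{-\alpha}b_j^{\alpha-1}$ in the free positive variables $b_1,\dots,b_d$. These monomials are pairwise distinct — in particular the diagonal term $i=j$ contributes $b_i^{-1}$, which for $\alpha\in(0,1)$ no off-diagonal pair can produce — and distinct monomials are linearly independent on $(0,\infty)^d$. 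Comparing the coefficient of $b_i^{-1}$ in the two expressions forces $\abs{\inner{e_i}{Ae_i}}=\abs{\inner{e_i}{A'e_i}}$, hence $\inner{e_i}{Ae_i}=\inner{e_i}{A'e_i}$ since both numbers are nonnegative. As the basis is arbitrary, $\inner{v}{Av}=\inner{v}{A'v}$ for every vector $v$, and polarization yields $A=A'$. The endpoints $\alpha\in\{0,1\}$ require a separate but simpler argument: there the quadratic form collapses to $\tr B^{-1}A^2$, so the hypothesis reads $\tr B^{-1}(A^2-A'^2)=0$ for all $B\in\mc{L}^{++}(\cH)$; since the inverses of positive definite operators span $\mc{L}^{sa}(\cH)$ this gives $A^2=A'^2$, and thus $A=A'$ by uniqueness of the positive square root. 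In all cases $\phi(A)=A$, which completes the reduction.

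\emph{Main obstacle.} Granting Theorem~\ref{fo} — which carries the genuine difficulty — the only delicate point here is the recovery lemma of the third step: the linear independence of the monomials $b_i^{-\alpha}b_j^{\alpha-1}$ that lets one read off $\inner{e_i}{Ae_i}$ degenerates precisely at $\alpha\in\{0,1\}$, so those two parameter values must be separated out and handled by the direct $\tr B^{-1}A^2$ computation. The rest is the routine transport of the definite-cone conclusion across the support/finiteness dictionary of Remark~\ref{R:M}.
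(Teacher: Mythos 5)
Your argument is correct, and its first half is exactly the paper's: the characterization ``$B$ is nonsingular iff $K_\alpha(A||B)<\infty$ for all $A$'' (from Remark~\ref{R:M}) to get $\phi\ler{\mc{L}^{++}(\cH)}=\mc{L}^{++}(\cH)$, followed by Theorem~\ref{fo} on the restriction. Where you diverge is the singular elements: the paper approximates a singular $A$ by definite $A_n$, reruns the Claim~\ref{kiterj} argument to show $\{\phi(A_n)\}_{n\in\N}$ converges to some $L$, and then identifies $L=\phi(A)$ via Lemma~\ref{L:M}; you instead exploit the hypothesis directly --- $K_\alpha(\phi(A)||B)=K_\alpha(A||B)$ for all definite $B$ once $\phi$ is normalized to be the identity on $\mc{L}^{++}(\cH)$ --- and prove an identification statement from scratch. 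Note that what you prove in your second and third steps is precisely (a strengthening of the hypothesis of) the paper's Lemma~\ref{L:M}, which you could simply have cited, since your identity holds in particular for all $C\in\cM(\cH)$; your $t$-scaling step to split off $\tr A=\tr A'$ is then also unnecessary. That said, your route buys something real: it needs no limits, no convergence of $\phi(A_n)$, and no topology at all, whereas the paper's sketch must be careful that its argument does \emph{not} give continuity of $\phi$ on $\mc{L}^{+}(\cH)$ and only identifies the limit through Lemma~\ref{L:M} anyway. Your independence argument (free diagonal parameters $b_1,\dots,b_d$ and multivariate monomials) is also a genuine alternative to the paper's proof of Lemma~\ref{L:M}, which uses one-parameter families $tP+\frac{1-t}{d-1}Q$ and linear independence of functions of $t$. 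One small slip in your wording: at $\alpha=1/2$ the monomials $b_i^{-\alpha}b_j^{\alpha-1}$ are \emph{not} pairwise distinct, since the pairs $(i,j)$ and $(j,i)$ give the same function; this is harmless because your coefficient comparison only requires that the diagonal monomial $b_i^{-1}$ differ from every off-diagonal one, which does hold for all $\alpha\in(0,1)$ (the off-diagonal exponent vectors have two nonzero entries), but the claim should be restated accordingly.
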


\begin{proposition} \label{density}
Let $\alpha \in [0,1]$ be an arbitrary but fixed number and let $\phi: \cS(\cH) \rightarrow \cS(\cH)$ be a bijection such that
$$
K_\alpha(\phi(A)||\phi(B))=K_\alpha(A||B) \qquad \ler{A, B \in \cS(\cH)}.
$$
Then there exists a unitary or an antiunitary operator $U: \cH \rightarrow \cH$ such that
$$
\phi(A)=UAU^* \qquad \ler{A \in \cS(\cH)}.
$$
\end{proposition}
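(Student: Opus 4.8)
The plan is to reduce Proposition~\ref{density} to the positive semidefinite case already settled in Proposition~\ref{semidef}, exploiting the scaling behaviour of $K_\alpha$. The first ingredient I would record is a homogeneity-type identity: for positive reals $s,t$ and density operators $\rho,\sigma\in\cS(\cH)$, a direct computation from the second expression in Definition~\ref{D:1} gives
\begin{equation*}
K_\alpha\ler{s\rho\,||\,t\sigma}=\frac{s^2}{t}\ler{K_\alpha(\rho||\sigma)+1}-2s+t,
\end{equation*}
where one uses $\tr\sigma^{-\alpha}\rho\sigma^{\alpha-1}\rho=K_\alpha(\rho||\sigma)+2\tr\rho-\tr\sigma=K_\alpha(\rho||\sigma)+1$. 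By the support dichotomy in Remark~\ref{R:M}, this identity holds in the extended sense as well (both sides being $+\infty$ exactly when $\supp{\rho}\nsubseteq\supp{\sigma}$), so it remains valid for possibly singular density operators.

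Next I would lift $\phi$ to the full cone. Every nonzero $A\in\mc{L}^{+}(\cH)$ factors uniquely as $A=(\tr A)\,\rho$ with $\rho=A/\tr A\in\cS(\cH)$, so I define
\begin{equation*}
\Phi\colon\mc{L}^{+}(\cH)\to\mc{L}^{+}(\cH),\qquad \Phi(0)=0,\quad \Phi(A)=(\tr A)\,\phi\ler{A/\tr A}\quad(A\neq 0).
\end{equation*}
Since $\phi$ is a bijection of $\cS(\cH)$ and the radial scaling is invertible, $\Phi$ is a bijection; moreover $\tr\Phi(A)=\tr A$ because $\tr\phi(\rho)=1$. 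Writing $A=s\rho$ and $B=t\sigma$, one has $\Phi(A)=s\phi(\rho)$ and $\Phi(B)=t\phi(\sigma)$, and applying the identity above to both $(\rho,\sigma)$ and $(\phi(\rho),\phi(\sigma))$, the hypothesis $K_\alpha(\phi(\rho)||\phi(\sigma))=K_\alpha(\rho||\sigma)$ yields $K_\alpha(\Phi(A)||\Phi(B))=K_\alpha(A||B)$ for all nonzero $A,B$. The pairs involving $0$ are checked separately: $K_\alpha(0||B)=\tr B$ is preserved because $\Phi$ preserves the trace, $K_\alpha(A||0)=+\infty$ for $A\neq 0$ is preserved since $\Phi(A)\neq 0$, and $K_\alpha(0||0)=0$ trivially. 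Hence $\Phi$ preserves the $\chi_\alpha^2$-divergence on all of $\mc{L}^{+}(\cH)$.

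Now Proposition~\ref{semidef} applies to $\Phi$ and produces a unitary or an antiunitary operator $U$ on $\cH$ with $\Phi(A)=UAU^*$ for every $A\in\mc{L}^{+}(\cH)$. Restricting to a density operator $\rho\in\cS(\cH)$, where $\tr\rho=1$, gives $\phi(\rho)=\Phi(\rho)=U\rho U^*$, which is exactly the assertion. The only point demanding genuine care is the scaling identity in the singular case, namely checking that the limiting definition \eqref{def1} is compatible with the factorization $A=(\tr A)\rho$ when $\rho$ is not invertible; once the support dichotomy of Remark~\ref{R:M} is invoked this is routine. Thus I expect no serious obstacle in the reduction itself, the substantive work having already been absorbed into Proposition~\ref{semidef}.
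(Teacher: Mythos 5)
Your proof is correct, but it runs in the opposite direction from the paper's. The paper proves Proposition \ref{density} by first observing that $B\in\cM(\cH)$ is characterized by the finiteness of $K_\alpha(A||B)$ for all $A$, so that $\phi(\cM(\cH))=\cM(\cH)$; it then feeds the restriction directly into the Subsection \ref{2fej} machinery (the map $\xi$ satisfying \eqref{E:M3}) to get $\phi(A)=UAU^*$ on $\cM(\cH)$, and finally extends to singular densities by an approximation argument combined with the identification Lemma \ref{L:M}, exactly as in the last part of the proof of Proposition \ref{semidef}. You instead scale \emph{up}: the radial lift $\Phi(A)=(\tr A)\,\phi(A/\tr A)$ together with your homogeneity identity $K_\alpha(s\rho||t\sigma)=\frac{s^2}{t}\ler{K_\alpha(\rho||\sigma)+1}-2s+t$ transports the preserver property from $\cS(\cH)$ to all of $\mc{L}^+(\cH)$, after which Proposition \ref{semidef} finishes the job. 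Amusingly, the paper uses the same homogeneity in the reverse direction in part three of the proof of Theorem \ref{fo}, where $\psi_\lambda(A)=\frac{1}{\lambda}\psi(\lambda A)$ scales the cone \emph{down} to the state space. Your identity checks out (sanity check: $\rho=\sigma$ gives $\frac{(s-t)^2}{t}$, as a direct computation confirms), the extended-value cases are handled correctly via the support dichotomy of Remark \ref{R:M} (noting $\supp{s\rho}=\supp{\rho}$ and that $t\sigma+\eps I$ is just a reparametrization of the perturbation family in \eqref{def1}), the boundary pairs involving $0$ are verified, and there is no circularity since Proposition \ref{semidef} rests only on Theorem \ref{fo} and Lemma \ref{L:M}. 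What your route buys is brevity and the elimination of any repetition of the approximation-plus-Lemma-\ref{L:M} extension step for singular densities, since that work is absorbed once and for all into Proposition \ref{semidef}; what the paper's route buys is independence from Proposition \ref{semidef} (the two propositions are proved in parallel from the common core), which keeps the logical structure flatter. Both are complete proofs given the paper's preceding results.
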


The proofs of the latter two propositions can be obtained by using arguments similar to the ones that we will employ in the proof of Theorem \ref{fo}. Therefore, we will only sketch those proofs in the last part of the next section.

\section{Proofs}

This section is devoted to the proofs of our results.
However, let us begin with the following remark. We have already mentioned that in our previous works \cite{mpv15} and \cite{dv16b} we presented structural results concerning maps on the positive definite or semidefinite cones preserving Bregman divergences, or Jensen divergences, or $f$-divergences. Therefore, it is necessary to make it clear that what we obtain in the present paper, our main result Theorem \ref{fo} is a really new result, it is independent from the previous ones. So, we need to verify that the $\chi_\alpha^2$-divergences we consider here are neither $f$-divergences (with the exception of the cases $\alpha =0$, $\alpha =1$), nor Bregman or Jensen divergences on the set of all positive definite operators.

Indeed, these are very easy to see. As for $f$-divergences (see, e.g., Section 2 in \cite{HMPB}),  write $A=tI$, $t>0$ and $B=I$ into
\[
S_f(A||B)=K_\alpha (A||B)
\]
and obtain that $f(t)=(t-1)^2$, $t>0$. It then follows that
\[
S_f(A||B)=K_0(A||B)
\]
holds for all $A,B\in \mc{L}^{++}(\mc{H})$ which implies that $\alpha=0$ or $\alpha=1$.

As for Bregman divergences (see, e.g., Section 1 in \cite{mpv15}), we do something similar. 
We write $A=tI$, $t>0$ and $B=sI$, $s>0$ into
\[
H_f(A||B)=K_\alpha (A||B)
\]
and, for $s=1$, conclude that $f$ is a quadratic function. Letting now $s$ vary, we see that the left hand side of the equality above is quadratic in $s$, while the right hand side is not so. This gives a contradiction.

Finally, as for Jensen divergences, it is clear that they are symmetric in their variables while the $\chi^2_\alpha$-divergences are not so. Consequently, the results of the present paper are really new. In fact, as can be seen from the arguments to be given below, the proofs are more deep and involved than any of the previous results we have obtained so far in this line of research.

In the next pages we present the proof of Theorem \ref{fo}. For the sake of transparency, we divide it into three parts given in the following three subsections the first two parts being split into several substeps.

\subsection{Proof of Theorem \ref{fo} --- part one} \label{1fej}

In what follows, let $\alpha \in [0,1]$ be an arbitrary but fixed number and let $\phi: \mc{L}^{++}(\cH) \rightarrow \mc{L}^{++}(\cH)$ be a bijective map such that
$$
K_\alpha(\phi(A)||\phi(B))=K_\alpha(A||B) \qquad \ler{A, B \in \mc{L}^{++}(\cH)}.
$$

In the first part of the proof we show that $\phi$ is a homeomorphism and it can be extended to a map $\psi$ on the set $\mc{L}^{+}(\cH).$ (We make a remark here: observe that $\mc{L}^{sa}(\cH)$ is a finite dimensional linear space, hence there is only one locally convex Hausdorff vector topology on it, the topology of the operator norm, and whenever we use topological notions we always mean that unique topology.)
Furthermore, we also verify that the extension $\psi$ is bijective on $\mc{L}^+(\cH)$, it \emph{almost} preserves the $\chi_\alpha^2$-divergence (for the definition of this notion see Claim \ref{psziorz}), and it preserves the trace.

In what follows we will need the continuity properties of the $\chi^2_\alpha$-divergences what we collect below.

\begin{remark} \label{mjfoly}
First, it is clear that the map
$$
K_\alpha(.||B): \, \mc{L}^+(\cH)\rightarrow [0,\infty); \, A \mapsto K_\alpha(A||B)=\tr B^{-\alpha}(A-B)B^{\alpha-1}(A-B)
$$
is continuous for any fixed $B \in \mc{L}^{++}(\cH),$
and the map
$$
K_\alpha(A||.): \, \mc{L}^{++}(\cH)\rightarrow [0,\infty); \, B \mapsto K_\alpha(A||B)=\tr B^{-\alpha}(A-B)B^{\alpha-1}(A-B)
$$
is continuous for any fixed $A \in \mc{L}^{+}(\cH).$

We remark that $K_\alpha(.||.)$ is not continuous in its first variable when the second variable is a singular element of 
$\mc{L}^+(\cH)$. To see this simple statement, let $B=P$ be a rank-one projection and set $A_n=P+(1/n)I$ $(n\in \mathbb N)$, a sequence which converges to $A=P$. Then we have $K_\alpha(A_n||B)=\infty$ for all $n\in \mathbb N$ but $K_\alpha (A||B)=0$.

We next show that the $\chi_\alpha^2$-divergence is not continuous on $\mc{L}^+(\cH)$ in its second variable. We consider only the case where $\alpha=0$ or $\alpha=1$, that is when $K_\alpha(.||.)$ is an $f$-divergence. In the remaining cases one can argue in a similar way.

In fact, we have the discontinuity already in two-dimension. To this, set
\[
B_n^{1/2}=
\left[
\begin{matrix}
1 & \frac{1}{n} \\ \frac{1}{n} & \frac{2}{n^2}
\end{matrix}
\right]
\]  
for every $n\in \mathbb N$ and
\[
P=
\left[
\begin{matrix}
1 & 0 \\ 0 & 0
\end{matrix}
\right].
\]  
We clearly have $B_n\to P$ and one can verify
\[
K_0 (P|| B_n)=4+n^2 -2 + \ler{1+\frac{2}{n^2} +\frac{4}{n^4}} \to \infty
\]
although $K_\alpha(P||P)=0$.

We mention that this example shows that the statement Proposition 2.12 in \cite{HMPB} asserting that the $f$-divergences are continuous on $\mc{L}^+(\cH)$ in their second variables is false.
\end{remark}

We use the above mentioned continuity properties of $K_\alpha(.||.)$ to prove the following statement.

\begin{claim} \label{folyt}
The map $\phi$ is a homeomorphism.
\end{claim}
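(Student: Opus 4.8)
The plan is to deduce the claim directly from the continuity properties recorded in Remark~\ref{mjfoly} together with the identity
\[
K_\alpha(C||D)=\norm{D^{\frac{\alpha-1}{2}}(C-D)D^{\frac{-\alpha}{2}}}_{HS}^2 \qquad \ler{C\in \mc{L}^{+}(\cH),\, D\in \mc{L}^{++}(\cH)}.
\]
First I would observe that it suffices to prove that $\phi$ is continuous. Indeed, since $\phi$ is a bijection, writing $C=\phi(A)$ and $D=\phi(B)$ in the preserver equation shows that $\phi^{-1}$ preserves the $\chi_\alpha^2$-divergence as well; hence the continuity of $\phi^{-1}$ follows from the very same argument applied to $\phi^{-1}$ in place of $\phi$, and a continuous bijection with continuous inverse is a homeomorphism. (Recall that $\mc{L}^{sa}(\cH)$ is finite dimensional, so its topology is metrizable and verifying sequential continuity is enough.)

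Next I would verify sequential continuity of $\phi$. Let $A_n\to A$ in $\mc{L}^{++}(\cH)$. Applying the preserver equation to the pair $(A_n,A)$ gives
\[
K_\alpha(\phi(A_n)||\phi(A))=K_\alpha(A_n||A).
\]
Since $A\in \mc{L}^{++}(\cH)$ is fixed, the first-variable continuity from Remark~\ref{mjfoly} yields $K_\alpha(A_n||A)\to K_\alpha(A||A)=0$, and therefore $K_\alpha(\phi(A_n)||\phi(A))\to 0$.

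It then remains to turn this vanishing of the divergence into convergence of the operators. Set $D:=\phi(A)\in \mc{L}^{++}(\cH)$. By the displayed identity, $K_\alpha(\phi(A_n)||D)\to 0$ means precisely that $D^{\frac{\alpha-1}{2}}(\phi(A_n)-D)D^{\frac{-\alpha}{2}}\to 0$ in the Hilbert--Schmidt norm. The linear map $X\mapsto D^{\frac{\alpha-1}{2}}XD^{\frac{-\alpha}{2}}$ on $\mc{L}(\cH)$ is a bijection, because $D$ is invertible, hence a linear homeomorphism of the finite dimensional space $\mc{L}(\cH)$; consequently $\phi(A_n)-D\to 0$, i.e. $\phi(A_n)\to \phi(A)$. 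This proves the continuity of $\phi$, and by the reduction above, that $\phi$ is a homeomorphism.

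The argument is short precisely because the base point kept in the second slot is fixed and positive definite. The only point that requires care is to avoid the genuinely discontinuous regime of $K_\alpha(.||.)$ described in Remark~\ref{mjfoly}, where the second variable degenerates into a singular operator; in our situation this never occurs, since the second argument throughout is the fixed invertible operator $\phi(A)$, so that mild obstacle is sidestepped automatically.
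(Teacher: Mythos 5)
Your proposal is correct and follows essentially the same route as the paper: reduce to continuity of $\phi$ by noting $\phi^{-1}$ is also a preserver, deduce $K_\alpha(\phi(A_n)||\phi(A))\to 0$ from the first-variable continuity in Remark~\ref{mjfoly}, and then invert the congruence $X\mapsto \phi(A)^{\frac{\alpha-1}{2}}X\phi(A)^{-\frac{\alpha}{2}}$ at the fixed positive definite base point $\phi(A)$. The only (minor) difference is that where you invoke the abstract fact that this map is a linear homeomorphism of the finite-dimensional space $\mc{L}(\cH)$, the paper performs the same inversion quantitatively, establishing the explicit estimate $K_\alpha(\phi(A_n)||\phi(A))\geq \norm{\phi(A_n)-\phi(A)}_{op}^2/\norm{\phi(A)}_{op}$ via the domination of the operator norm by the Hilbert--Schmidt norm and submultiplicativity, an inequality it later reuses in the proof of Claim~\ref{kiterj}.
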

\begin{proof}
Since $\phi$ is a bijection on $\mc{L}^{++}(\cH)$ preserving the $\chi_\alpha^2$-divergence, so is its inverse $\phi^{-1}.$ Therefore, we need only to show that $\phi$ is continuous. Let $\{A_n\}_{n \in \N}$ be a sequence in $\mc{L}^{++}(\cH)$ which converges to some $A \in \mc{L}^{++}(\cH).$ By the continuity of the map $K_\alpha(.||A)$, see Remark \ref{mjfoly}, we have
$$
\lim_{n \to \infty} K_\alpha\ler{A_n||A}=K_\alpha \ler{\lim_{n \to \infty} A_n\middle|\middle|A}=K_\alpha(A||A)=0.
$$
Therefore,
$$\lim_{n \to \infty} K_\alpha\ler{\phi\ler{A_n}||\phi\ler{A}}=0.$$
On the other hand, we compute
\begin{equation}\label{homeo}
\begin{gathered}
K_\alpha\ler{\phi\ler{A_n}||\phi\ler{A}}
\\
=\tr \ler{\phi(A)^{-\frac{\alpha}{2}}\ler{\phi\ler{A_n}-\phi(A)}\phi(A)^{\frac{\alpha-1}{2}}}
\\
\cdot \ler{\phi(A)^{-\frac{\alpha}{2}}
\ler{\phi\ler{A_n}-\phi(A)}\phi(A)^{\frac{\alpha-1}{2}}}^*
\\
\geq \norm{\phi(A)^{-\frac{\alpha}{2}}\ler{\phi\ler{A_n}-\phi(A)}\phi(A)^{\frac{\alpha-1}{2}}}_{op}^2
\\ 
\geq \frac{\norm{\phi\ler{A_n}-\phi(A)}_{op}^2}{\norm{\phi(A)^{\frac{\alpha}{2}}}_{op}^2 \norm{\phi(A)^{\frac{1-\alpha}{2}}}_{op}^2}
=\frac{\norm{\phi\ler{A_n}-\phi(A)}_{op}^2}{\norm{\phi(A)}_{op}}.
\end{gathered}
\end{equation}
The first inequality holds because the Hilbert-Schmidt norm majorizes the operator norm, and the second inequality holds because of the submultiplicativity of the operator norm. The term $\norm{\phi(A)}_{op}$ is independent of $n,$ hence we conclude that $\phi\ler{A_n} \to \phi(A)$ proving the claim.
\end{proof}

The following assertion is a sort of identification lemma relative to the set $\mc{M}(\mc{H})$ of all nonsingular states.

\begin{lemma}\label{L:M}
Assume $A,B\in \mc{L}(\mc{H})^{+}$ are such that for all $C\in \mc{M}(\mc{H})$ we have
\begin{equation}\label{E:M4}
K_\alpha(A||C)=K_\alpha(B||C).
\end{equation}
Then we obtain $A=B$. 
\end{lemma}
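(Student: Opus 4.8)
The plan is to work directly from the explicit formula $K_\alpha(A||C)=\tr C^{-\alpha}AC^{\alpha-1}A-2\tr A+\tr C$ recorded in Definition~\ref{D:1}. Since the term $\tr C$ depends on $C$ only, the hypothesis \eqref{E:M4} is equivalent to
\[
\tr C^{-\alpha}AC^{\alpha-1}A-\tr C^{-\alpha}BC^{\alpha-1}B=2\ler{\tr A-\tr B}\qquad\ler{C\in\mc{M}(\mc{H})}.
\]
The left-hand side is homogeneous of degree $-1$ in $C$, so writing an arbitrary $D\in\mc{L}^{++}(\mc{H})$ as $D=(\tr D)\,C$ with $C=D/\tr D\in\mc{M}(\mc{H})$, the identity propagates to
\[
\tr D^{-\alpha}AD^{\alpha-1}A-\tr D^{-\alpha}BD^{\alpha-1}B=\frac{2\ler{\tr A-\tr B}}{\tr D}\qquad\ler{D\in\mc{L}^{++}(\mc{H})}.
\]
This lets me use reference operators of arbitrary trace.

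Next I would fix an arbitrary orthonormal basis $\{e_i\}_{i=1}^{d}$ and take $D=\mathrm{diag}(\lambda_1,\dots,\lambda_d)$ diagonal in it, with $\lambda_i>0$. A direct computation gives $\tr D^{-\alpha}XD^{\alpha-1}X=\sum_{i,j}\lambda_i^{-\alpha}\lambda_j^{\alpha-1}\abs{X_{ij}}^2$ for self-adjoint $X$, where $X_{ij}=\inner{e_i}{Xe_j}$. Hence, abbreviating $c_{ij}:=\abs{A_{ij}}^2-\abs{B_{ij}}^2$, the displayed identity reads
\[
\sum_{i,j}\lambda_i^{-\alpha}\lambda_j^{\alpha-1}c_{ij}=\frac{2\ler{\tr A-\tr B}}{\lambda_1+\dots+\lambda_d}.
\]

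The crucial step is to degenerate the reference operator along a single coordinate: set $\lambda_1=\eps$ and $\lambda_2=\dots=\lambda_d=1$ and let $\eps\to0^+$. The right-hand side then converges to the finite number $2\ler{\tr A-\tr B}/(d-1)$, while the left-hand side becomes a finite combination of the powers $\eps^{-1},\eps^{-\alpha},\eps^{\alpha-1},\eps^{0}$ with respective coefficients $c_{11}$, $\sum_{j\ge2}c_{1j}$, $\sum_{i\ge2}c_{i1}$ and $\sum_{i,j\ge2}c_{ij}$. Boundedness of the limit forces every coefficient attached to a negative power of $\eps$ to vanish; using $c_{ij}=c_{ji}$ this yields both $c_{11}=0$ and $\sum_{j\ge2}c_{1j}=0$, hence $\sum_{j}c_{1j}=0$, that is $\inner{e_1}{A^2e_1}=\sum_j\abs{A_{1j}}^2=\sum_j\abs{B_{1j}}^2=\inner{e_1}{B^2e_1}$. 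As the basis was arbitrary, $\inner{v}{A^2v}=\inner{v}{B^2v}$ for every unit vector $v$, whence $A^2=B^2$. Since $A$ and $B$ are positive semidefinite and the positive square root is unique, $A=B$, as claimed. When $\alpha\in\{0,1\}$ only the single power $\eps^{-1}$ is singular and its coefficient equals $\inner{e_1}{A^2e_1}-\inner{e_1}{B^2e_1}$ outright, so the argument only simplifies.

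The main obstacle is conceptual rather than computational: one has to recognize that sending one eigenvalue of the reference operator to zero isolates, through the most singular term $\eps^{-1}$, exactly the diagonal entries of $A^2$ and $B^2$ in the chosen basis, after which uniqueness of the positive square root closes the argument immediately. The only delicate points are the bookkeeping when $\alpha=\tfrac12$ (there $\eps^{-\alpha}$ and $\eps^{\alpha-1}$ merge, yet the symmetry $c_{ij}=c_{ji}$ still delivers $\sum_{j\ge2}c_{1j}=0$) and the elementary fact that a finite sum of distinct real powers of $\eps$ stays bounded as $\eps\to0^+$ precisely when all negative exponents carry zero coefficient.
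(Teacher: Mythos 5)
Your proposal is correct: the homogeneity reduction, the diagonal test operators, the asymptotic separation of the powers $\eps^{-1},\eps^{-\alpha},\eps^{\alpha-1},\eps^{0}$, and the conclusion via $A^2=B^2$ and uniqueness of the positive square root all hold up, including your handling of the merged exponents at $\alpha=\tfrac12$. It is worth seeing how it relates to the paper's own argument: after your normalization, your test operators $\mathrm{diag}(\eps,1,\dots,1)/(\eps+d-1)$ are exactly the paper's family $tP+\frac{1-t}{d-1}(I-P)$ with $P$ a rank-one projection and $t=\eps/(\eps+d-1)$, so the difference is not in the choice of reference operators but in the separation device. The paper lets $t$ range over all of $(0,1)$ and invokes exact linear independence of the functions $1$, $1/t$, $1/(1-t)$, $t^{-\alpha}(1-t)^{\alpha-1}$, $(1-t)^{-\alpha}t^{\alpha-1}$; for $\alpha\in(0,1)$ this isolates $\tr PAPA=(\tr PA)^2$, giving $\tr PA=\tr PB$ for every rank-one $P$ and hence $A=B$ with no squaring, while for $\alpha\in\{0,1\}$ it isolates $\tr A^2P$ and passes through $A^2=B^2$ as you do. You instead degenerate at the endpoint $t\to 0^+$ and read off asymptotics, which buys a uniform treatment of every $\alpha\in[0,1]$ through the single quantity $\sum_j c_{1j}=(A^2)_{11}-(B^2)_{11}$, at the cost of the (elementary but necessary) lemma that a sum of distinct powers of $\eps$ is bounded only if all negative exponents have vanishing coefficients, plus the homogeneity step, which the paper sidesteps by constructing trace-one test operators from the outset. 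One simplification you overlooked: for $\alpha\in(0,1)$ your most singular coefficient $c_{11}=\inner{e_1}{Ae_1}^2-\inner{e_1}{Be_1}^2$ already forces $\inner{e_1}{Ae_1}=\inner{e_1}{Be_1}$ (both numbers being nonnegative), so $A=B$ follows at once from the arbitrariness of the basis; the $\eps^{-\alpha}$ and $\eps^{\alpha-1}$ coefficients are genuinely needed only at the endpoints $\alpha\in\{0,1\}$, which mirrors precisely the case split in the paper's proof.
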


\begin{proof}
By \eqref{E:M4} we have
\[
\tr \ler{C^{-\alpha}AC^{\alpha -1} A}-2\tr A=
\tr \ler{C^{-\alpha}BC^{\alpha -1} B}-2\tr B
\]
for all $C\in \mc{M}(\mc{H})$. Pick arbitrary rank one projection $P\in \mc{P}_1(\mc{H})$ and let $Q=I-P$. For any $t\in (0,1)$ insert $tP+\frac{1-t}{d-1}Q$ into the place of $C$ in the displayed formula above ($d$ is the dimension of $\mc{H}$).

First consider the case where $\alpha=0$. We have
\begin{equation*}
\begin{aligned}
\frac{1}{t}\tr APA +\frac{d-1}{1-t}\tr AQA -2\tr A=
\frac{1}{t}\tr BPB +\frac{d-1}{1-t}\tr BQB -2\tr B
\end{aligned}
\end{equation*}
Since the functions $1, \frac{1}{t}, \frac{1}{1-t}$ are linearly independent over the interval $(0,1)$, it follows that
\[
\tr A^2 P=\tr APA=\tr BPB=\tr B^2 P
\]
holds for every rank-one projection $P$ on $\mc{H}$ which implies $A^2=B^2$ and then we deduce $A=B$. The same reasoning works for the case where $\alpha=1$. 

Now, let $\alpha \in (0,1)$. Again, the argument is practically the same but the computation is a bit more complicated. For $P,Q$ given as above and for any $t\in (0,1)$
we have
\begin{equation*}
\begin{gathered}
\frac{1}{t}\tr PAPA +
t^{-\alpha}\ler{\frac{1-t}{d-1}}^{\alpha-1}\tr PAQA\\+
\ler{\frac{1-t}{d-1}}^{-\alpha}t^{\alpha-1}\tr QAPA+
\frac{d-1}{1-t}\tr QAQA -2\tr A\\=
\frac{1}{t}\tr PBPB +
t^{-\alpha}\ler{\frac{1-t}{d-1}}^{\alpha-1}\tr PBQB\\+
\ler{\frac{1-t}{d-1}}^{-\alpha}t^{\alpha-1}\tr QBPB+
\frac{d-1}{1-t}\tr QBQB -2\tr B.
\end{gathered}
\end{equation*}
Using the linear independence of the functions
\[
1, \frac{1}{t}, \frac{1}{1-t}, t^{-\alpha}\ler{1-t}^{\alpha-1}, \ler{1-t}^{-\alpha}t^{\alpha-1}
\]
over the interval $(0,1)$
in the case where $\alpha\neq 1/2$ (if $\alpha=1/2$, the last two functions are the same) we get that
$\tr PAPA= \tr PBPB$ holds for every  rank-one projection $P$ on $\mc{H}$ which easily gives us that $A=B$. The case $\alpha=1/2$ can be treated in the same way.
\end{proof}

\begin{claim} \label{kiterj}
Let $\{A_n\}_{n \in \N}$ be a convergent sequence of positive definite operators on $\mc{H}$ and let us denote its limit by $A.$ (Clearly, $A \in \mc{L}^{+}(\cH).$) Then $\{\phi\ler{A_n}\}_{n\in \mathbb N}$ is convergent and, obviously, $\lim_{n \to \infty} \phi\ler{A_n} \in \mc{L}^+(\cH).$
Consequently, it follows that if $\{A_n\}_{n \in \N}$ and $\{B_n\}_{n \in \N}$ are convergent sequences of positive definite operators on $\mc{H}$ such that $\lim_{n \to \infty} A_n=\lim_{n \to \infty} B_n,$ then we have $\lim_{n \to \infty} \phi\ler{A_n}=\lim_{n \to \infty} \phi\ler{B_n}.$
\end{claim}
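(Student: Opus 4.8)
The plan is to show that $\{\phi(A_n)\}_{n\in\N}$ is bounded, that all of its subsequential limits coincide, and then to invoke the standard fact that a bounded sequence in a finite-dimensional normed space, all of whose convergent subsequences share a common limit, is itself convergent.

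First I would establish boundedness. Fix any $B \in \mc{L}^{++}(\cH)$. Since $A_n \to A$ and $K_\alpha(.||B)$ is continuous on $\mc{L}^{+}(\cH)$ by Remark \ref{mjfoly}, the numbers $K_\alpha(A_n||B) = K_\alpha(\phi(A_n)||\phi(B))$ converge to $K_\alpha(A||B) < \infty$, and hence are bounded by some $M>0$. Reusing the chain of inequalities in \eqref{homeo} (with $B$ in place of $A$) gives
\[
K_\alpha(\phi(A_n)||\phi(B)) \geq \frac{\norm{\phi(A_n)-\phi(B)}_{op}^2}{\norm{\phi(B)}_{op}},
\]
so that $\norm{\phi(A_n)}_{op} \leq \norm{\phi(B)}_{op} + \sqrt{M\,\norm{\phi(B)}_{op}}$ for every $n$, yielding the asserted boundedness.

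Next I would show that any two subsequential limits agree. Being a bounded sequence of positive definite operators, $\{\phi(A_n)\}_{n\in\N}$ has convergent subsequences, and every such subsequence has its limit in $\mc{L}^{+}(\cH)$. Suppose $\phi(A_{n_k}) \to X$ and $\phi(A_{m_j}) \to Y$ with $X, Y \in \mc{L}^{+}(\cH)$. Fix an arbitrary $C \in \mc{M}(\cH)$ and set $D := \phi^{-1}(C) \in \mc{L}^{++}(\cH)$, which exists by the surjectivity of $\phi$ together with the inclusion $\mc{M}(\cH) \subseteq \mc{L}^{++}(\cH)$. Since both $C$ and $D$ are positive definite, the maps $K_\alpha(.||C)$ and $K_\alpha(.||D)$ are continuous on $\mc{L}^{+}(\cH)$ by Remark \ref{mjfoly}; hence
\[
K_\alpha(X||C) = \lim_k K_\alpha(\phi(A_{n_k})||\phi(D)) = \lim_k K_\alpha(A_{n_k}||D) = K_\alpha(A||D).
\]
The same computation gives $K_\alpha(Y||C) = K_\alpha(A||D)$, so $K_\alpha(X||C) = K_\alpha(Y||C)$. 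As $C$ ranged over all of $\mc{M}(\cH)$, the identification Lemma \ref{L:M} forces $X = Y$.

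Combining these two steps shows that $\{\phi(A_n)\}_{n\in\N}$ converges to some $X \in \mc{L}^{+}(\cH)$. For the final ``consequently'' assertion, observe that the displayed computation shows the limit $X$ satisfies $K_\alpha(X||C) = K_\alpha(A||\phi^{-1}(C))$ for every $C \in \mc{M}(\cH)$; by Lemma \ref{L:M} these relations determine $X$ uniquely in terms of $A$ alone, independently of the particular approximating sequence. Consequently, if $\lim_n A_n = \lim_n B_n$, the limits $\lim_n \phi(A_n)$ and $\lim_n \phi(B_n)$ must coincide. The only genuine obstacle here is the boundedness of $\{\phi(A_n)\}$, since without it the subsequential argument has nothing to stand on; but, as noted, boundedness is delivered almost for free by the operator-norm estimate already used to prove that $\phi$ is a homeomorphism, so the real content of the claim is the uniqueness of the subsequential limit, which Lemma \ref{L:M} supplies.
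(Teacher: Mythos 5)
Your proof is correct and takes essentially the same route as the paper's: boundedness of $\{\phi(A_n)\}_{n\in\N}$ via the operator-norm estimate \eqref{homeo}, followed by uniqueness of the subsequential limit through the identification Lemma \ref{L:M} (you test against $C\in\mc{M}(\cH)$ and pull back by $\phi^{-1}$, while the paper tests against $\phi(X)$ for $X\in\mc{L}^{++}(\cH)$ --- the same thing by bijectivity). The only minor difference is the final assertion, which the paper obtains by interleaving $\{A_n\}$ and $\{B_n\}$ into a single convergent sequence, whereas you observe that Lemma \ref{L:M} already pins down the limit in terms of $A$ alone; both arguments are valid and equally routine.
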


\begin{proof} Let $\{A_n\}_{n \in \N}$ be a convergent sequence of positive definite operators on  $\mc{H}$ with $\lim_{n \to \infty} A_n=A \in \mc{L}^+(\cH)$ and let $X$ be an arbitrary element of $\mc{L}^{++}(\cH).$ Then, by Remark \ref{mjfoly},
\be \label{konv1}
K_\alpha\ler{A||X}=\lim_{n \to \infty} K_\alpha\ler{A_n||X}=\lim_{n \to \infty} K_\alpha\ler{\phi\ler{A_n}||\phi(X)}.
\ee
By the inequality \eqref{homeo} in Claim \ref{folyt} we have
$$
K_\alpha\ler{\phi\ler{A_n}||\phi(X)} \geq \frac{\norm{\phi\ler{A_n}-\phi\ler{X}}_{op}^2}{\norm{\phi(X)}_{op}}.
$$
The left hand side of the above inequality is convergent and hence bounded, so $\norm{\phi\ler{A_n}-\phi\ler{X}}_{op}$ is bounded as well. Therefore, the sequence $\{\phi\ler{A_n}\}_{n \in \N}$ is bounded. Assume that $\{\phi\ler{A_n}\}_{n \in \N}$ has two accumulation points, say $B_1$ and $B_2.$ That is, we have
$$
\lim_{n\to \infty} \phi(A_{k_n})=B_1
\text{ and }
\lim_{n\to \infty} \phi(A_{l_n})=B_2
$$
for some subsequences
$\{\phi(A_{k_n})\}_{n \in \N}$ and $\{\phi(A_{l_n})\}_{n \in \N}$.
The $\chi_\alpha^2$-diver\-gence is continuous in its first variable when the second variable is nonsingular, Remark \ref{mjfoly}, hence
$$
K_\alpha\ler{B_1||\phi(X)}=\lim_{n \to \infty} K_\alpha\ler{\phi(A_{k_n})||\phi(X)}
$$
and
$$
K_\alpha\ler{B_2||\phi(X)}=\lim_{n \to \infty} K_\alpha\ler{\phi(A_{l_n})||\phi(X)}
$$
hold for any $X \in \mc{L}^{++}(\cH).$
However, the right hand sides of the above equations coincide as the sequence $\{K_\alpha(\phi\ler{A_n}||\phi(X))\}_{n\in \mathbb N}$ is convergent, see \eqref{konv1}. So, we deduced that
\begin{equation*} \label{egyen}
K_\alpha\ler{B_1||\phi(X)}=K_\alpha\ler{B_2||\phi(X)}
\end{equation*}
for any $X \in \mc{L}^{++}(\cH).$ 
By Lemma~\ref{L:M} we obtain that $B_1=B_2$.
It follows that the sequence $\{\phi(A_n)\}_{n\in \mathbb N}$ is convergent.
\par
We can easily show the rest of the statement, that is, that $\lim_{n \to \infty} A_n=\lim_{n \to \infty} B_n$ implies $\lim_{n \to \infty} \phi\ler{A_n}=\lim_{n \to \infty} \phi\ler{B_n}.$ Indeed, assume that $\{A_n\}_{n \in \N}$ and $\{B_n\}_{n \in \N}$ are convergent sequences of positive definite operators such that $\lim_{n \to \infty} A_n=\lim_{n \to \infty} B_n.$

Let the sequence $\{C_n\}_{n \in \N}$ be defined by
$$
C_{2n}:=A_n \text{ and } C_{2n+1}:=B_n \qquad \ler{n \in \N}.
$$
Clearly, $\{C_n\}_{n\in \mathbb N}$ is a convergent sequence of positive definite operators, hence by the first part of this Claim (which has been already proven) the sequence $\{\phi\ler{C_n}\}_{n\in \mathbb N}$ is also convergent. Therefore, any subsequence of $\{\phi\ler{C_n}\}_{n \in \N}$ is convergent and has the same limit. In particular, we have $\lim_{n \to \infty} \phi\ler{A_n}=\lim_{n \to \infty} \phi\ler{B_n}.$
\end{proof}

\begin{remark} \label{inverz}
Observe that the statements in Claim \ref{kiterj} hold also for $\phi^{-1}$ as the latter map is also a $\chi_\alpha^2$-divergence preserving bijection just like $\phi.$
\end{remark}

%\begin{definition} \label{pszidefdef}
Now we are in the position to define the map
\be \label{pszidef}
\psi: \mc{L}^+(\cH) \rightarrow \mc{L}^+(\cH); \, A \mapsto \psi(A):=\lim_{Z \to A, Z \in \mc{L}^{++}(\cH)} \phi(Z).
\ee
%\end{definition}

The definition in \eqref{pszidef} is correct by Claim \ref{kiterj}. The map $\phi$ is continuous by Claim \ref{folyt}, hence $\psi(B)=\phi(B)$ for any $B\in \mc{L}^{++}(\cH).$ It follows that $\psi$ is an extension of $\phi$.
\par
By Remark \ref{inverz}, we can also define the transformation
\begin{equation*} \label{pszicsilldef}
\psi^*: \mc{L}^+(\cH) \rightarrow \mc{L}^+(\cH); \, A \mapsto \psi^{*}(A):=\lim_{Z \to A, Z \in \mc{L}^{++}(\cH)} \phi^{-1}(Z).
\end{equation*}

\begin{claim} \label{psziinverz}
The above defined map $\psi^*$ is the inverse of $\psi,$ that is, $\psi \circ \psi^*=\psi^*\circ \psi=\mathrm{id}_{\mc{L}^+(\cH)}.$
In particular, $\psi$ is bijective.
\end{claim}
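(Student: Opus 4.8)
The plan is to verify the two composition identities directly from the defining limits, exploiting the single fact that $\phi$ and $\phi^{-1}$ are mutually inverse on $\mc{L}^{++}(\cH)$, together with the convergence results already established in Claim~\ref{kiterj} (which, by Remark~\ref{inverz}, apply verbatim to $\phi^{-1}$ as well). I would first fix an arbitrary $A\in\mc{L}^+(\cH)$ and aim to show $\psi\ler{\psi^*(A)}=A$. Choose any sequence $\{Z_n\}_{n\in\N}$ in $\mc{L}^{++}(\cH)$ with $Z_n\to A$. By the definition of $\psi^*$ and the sequence-independence part of Claim~\ref{kiterj} applied to $\phi^{-1}$, we have $\phi^{-1}(Z_n)\to\psi^*(A)$.

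The crucial observation is then that $\{\phi^{-1}(Z_n)\}_{n\in\N}$ is itself a convergent sequence of \emph{positive definite} operators with limit $\psi^*(A)\in\mc{L}^+(\cH)$. Hence Claim~\ref{kiterj}, now applied to the original map $\phi$, yields that $\{\phi\ler{\phi^{-1}(Z_n)}\}_{n\in\N}$ is convergent and that its limit equals $\psi$ evaluated at the limit of the approximating sequence, i.e.
\[
\lim_{n\to\infty}\phi\ler{\phi^{-1}(Z_n)}=\psi\ler{\psi^*(A)}.
\]
But $\phi$ and $\phi^{-1}$ are inverse bijections on $\mc{L}^{++}(\cH)$, so $\phi\ler{\phi^{-1}(Z_n)}=Z_n\to A$. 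Comparing the two limits gives $\psi\ler{\psi^*(A)}=A$. Running the identical argument with the roles of $\phi$ and $\phi^{-1}$ (equivalently $\psi$ and $\psi^*$) interchanged produces $\psi^*\ler{\psi(A)}=A$ for every $A\in\mc{L}^+(\cH)$, which together establish $\psi\circ\psi^*=\psi^*\circ\psi=\mathrm{id}_{\mc{L}^+(\cH)}$ and, in particular, the bijectivity of $\psi$ with inverse $\psi^*$.

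I expect no computational difficulty here; the entire subtlety lies in correctly tracking which map each limit belongs to. The one point that must be handled with care is the passage $\lim_n\phi\ler{\phi^{-1}(Z_n)}=\psi\ler{\psi^*(A)}$: this is precisely an application of the ``independence of the approximating sequence'' conclusion of Claim~\ref{kiterj}, and it is legitimate only because $\{\phi^{-1}(Z_n)\}$ consists of positive definite operators and converges. Once that step is justified, the cancellation $\phi\circ\phi^{-1}=\mathrm{id}$ on $\mc{L}^{++}(\cH)$ closes the argument immediately.
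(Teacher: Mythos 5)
Your proof is correct and takes essentially the same route as the paper's: both arguments hinge on the sequence-independence conclusion of Claim~\ref{kiterj} (via Remark~\ref{inverz} for $\phi^{-1}$) to identify the limits $\psi(\cdot)$ and $\psi^*(\cdot)$ along the transformed sequences, and then cancel $\phi\circ\phi^{-1}=\mathrm{id}$ on $\mc{L}^{++}(\cH)$. The only cosmetic difference is that the paper writes out $\psi^*\circ\psi=\mathrm{id}_{\mc{L}^+(\cH)}$ explicitly and leaves the other composition to symmetry, whereas you do the reverse.
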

\begin{proof}
We only show that $\psi^*\circ \psi=\mathrm{id}_{\mc{L}^+(\cH)}$ as the equality $\psi \circ \psi^*=\mathrm{id}_{\mc{L}^+(\cH)}$ can be proven very similarly.
Let $A \in \mc{L}^+(\cH)$ be arbitrary and let $\{A_n\}_{n \in \N}$ be a sequence of positive definite operators on $\mc{H}$ with $\lim_{n \to \infty} A_n=A.$
Then
$$
\psi(A)=\lim_{n \to \infty} \phi\ler{A_n}
$$
and thus, by the definition of $\psi^*$, we have
$$
\psi^*\ler{\psi(A)}=\psi^*\ler{\lim_{n \to \infty} \phi\ler{A_n}}=\lim_{n \to \infty} \phi^{-1}\ler{\phi\ler{A_n}}=A.
$$
\end{proof}

\begin{claim} \label{psziorz}
The transformation $\psi$ \emph{almost} preserves the $\chi_\alpha^2$-divergence by what we mean that
\begin{equation*} \label{psziorzegy}
K_\alpha\ler{\psi(A)||\psi(B)}= K_\alpha\ler{A||B}
\end{equation*}
holds for any $A \in \mc{L}^+(\cH)$ and $B \in \mc{L}^{++}(\cH).$
The same is true for the map $\psi^{-1}.$
\end{claim}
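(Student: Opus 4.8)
The plan is to reduce everything to the two continuity statements of Remark~\ref{mjfoly} together with the divergence-preserving property of $\phi$ on $\mc{L}^{++}(\cH)$, exploiting the fact that $\psi$ was built as a limit of $\phi$ along sequences from the open cone. First I would fix $A\in\mc{L}^+(\cH)$ and $B\in\mc{L}^{++}(\cH)$ and choose any sequence $\{A_n\}_{n\in\N}$ in $\mc{L}^{++}(\cH)$ with $A_n\to A$. By Claim~\ref{kiterj} and the definition \eqref{pszidef} of $\psi$ we then have $\psi(A)=\lim_{n\to\infty}\phi(A_n)$. The crucial observation is that $B$ is positive definite, so that $\psi(B)=\phi(B)$ (since $\psi$ extends $\phi$) and, moreover, $\phi(B)\in\mc{L}^{++}(\cH)$ because $\phi$ maps the positive definite cone into itself.

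With these facts in hand the computation is short. Since $A_n,B\in\mc{L}^{++}(\cH)$, the preserver identity for $\phi$ gives $K_\alpha(A_n||B)=K_\alpha(\phi(A_n)||\phi(B))$ for every $n$. Letting $n\to\infty$, the left-hand side converges to $K_\alpha(A||B)$ by continuity of $K_\alpha(\,\cdot\,||B)$ in its first variable (here $B\in\mc{L}^{++}(\cH)$, so Remark~\ref{mjfoly} applies). On the right-hand side, $\phi(A_n)\to\psi(A)$ while the second slot equals the fixed positive definite operator $\phi(B)$; hence, again by the first-variable continuity of $K_\alpha(\,\cdot\,||\phi(B))$, the right-hand side converges to $K_\alpha(\psi(A)||\phi(B))=K_\alpha(\psi(A)||\psi(B))$. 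Chaining these limits yields $K_\alpha(\psi(A)||\psi(B))=K_\alpha(A||B)$, which is exactly the asserted \emph{almost} preservation.

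For the statement about $\psi^{-1}$ I would simply note that by Claim~\ref{psziinverz} we have $\psi^{-1}=\psi^*$, and $\psi^*$ is obtained from $\phi^{-1}$ in precisely the same way that $\psi$ is obtained from $\phi$; since $\phi^{-1}$ is itself a bijective $\chi_\alpha^2$-divergence preserver on $\mc{L}^{++}(\cH)$ (Remark~\ref{inverz}), the identical argument applies verbatim. The only delicate point --- and the reason the conclusion is merely an \emph{almost} preservation --- is that the argument forces the second variable to stay positive definite: the first-variable continuity of $K_\alpha$ fails as soon as the second argument becomes singular (the explicit counterexample in Remark~\ref{mjfoly}), so one cannot push the second slot to a boundary point by the same limiting device. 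Thus the restriction $B\in\mc{L}^{++}(\cH)$ is not a matter of convenience but is genuinely what the method delivers.
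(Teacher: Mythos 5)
Your proposal is correct and follows essentially the same route as the paper's own proof: both pick a sequence $A_n\to A$ in $\mc{L}^{++}(\cH)$, use $\psi(B)=\phi(B)$, the preserver identity $K_\alpha(A_n||B)=K_\alpha(\phi(A_n)||\phi(B))$, and the first-variable continuity of $K_\alpha(\cdot||\,\cdot\,)$ from Remark \ref{mjfoly} (valid since both second slots are positive definite), then handle $\psi^{-1}$ by the symmetric argument for $\phi^{-1}$. Your closing observation about why the second variable must stay in $\mc{L}^{++}(\cH)$ is a nice articulation of what the paper leaves implicit, but the substance of the argument is identical.
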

\begin{proof}
Pick $A \in \mc{L}^+(\cH), B \in \mc{L}^{++}(\cH)$ and let $\{A_n\}_{n \in \N}$ be an arbitrary sequence of positive definite operators on $\mc{H}$ converging to $A.$ Then
$$
K_\alpha \ler{\psi(A)||\psi(B)}=K_\alpha \ler{\psi(A)||\phi(B)}=K_\alpha\ler{\lim_{n \to \infty} \phi\ler{A_n}||\phi(B)}
$$
$$
=\lim_{n \to \infty} K_\alpha\ler{\phi\ler{A_n}||\phi(B)}=\lim_{n \to \infty} K_\alpha\ler{A_n||B}=K_\alpha\ler{A||B}.
$$
The verification of
$$
K_\alpha \ler{\psi^{-1}(A)||\psi^{-1}(B)}=K_\alpha\ler{A||B} \qquad \ler{A \in \mc{L}^+(\cH), B \in \mc{L}^{++}(\cH)}
$$
is similar.
\end{proof}

\begin{claim} \label{phitrace}
The map $\phi: \mc{L}^{++}(\cH) \rightarrow \mc{L}^{++}(\cH)$ preserves the trace, that is,
$$
\tr \phi(C)=\tr C \qquad \ler{C \in \mc{L}^{++}(\cH)}.
$$
\end{claim}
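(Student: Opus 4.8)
The plan is to exploit the elementary identity $K_\alpha(0||B)=\tr B$, valid for every $B\in\mc L^{++}(\cH)$, which follows immediately from the second formula in Definition~\ref{D:1} upon substituting $A=0$ (all the $A$-dependent terms vanish and $\tr 0 =0$). This reduces the whole trace-preservation statement to the single question of where the extended bijection $\psi$ of Claim~\ref{psziinverz} sends the zero operator. Indeed, once I know that $\psi(0)=0$, I can invoke the almost-preservation property of $\psi$ (Claim~\ref{psziorz}) with first argument $A=0\in\mc L^+(\cH)$ and second argument $C\in\mc L^{++}(\cH)$ to write
\[
\tr\phi(C)=K_\alpha\ler{0||\phi(C)}=K_\alpha\ler{\psi(0)||\psi(C)}=K_\alpha\ler{0||C}=\tr C,
\]
which is exactly the assertion of the claim (here $\psi(C)=\phi(C)$ since $\psi$ extends $\phi$, and $K_\alpha\ler{0||\phi(C)}=\tr\phi(C)$ again by the identity above).

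The heart of the matter, and the main obstacle, is therefore to prove that $\psi(0)=0$. Write $D:=\psi(0)\in\mc L^+(\cH)$. Applying Claim~\ref{psziorz} with $A=0$ and $B=\phi^{-1}\ler{\tfrac1n I}\in\mc L^{++}(\cH)$, so that $\psi(B)=\phi(B)=\tfrac1n I$, gives
\[
K_\alpha\ler{D||\tfrac1n I}=\tr\phi^{-1}\ler{\tfrac1n I}\qquad(n\in\N).
\]
The left-hand side is computed directly from Definition~\ref{D:1}: since all powers of $\tfrac1n I$ are scalar multiples of the identity, $(\tfrac1n I)^{-\alpha}=n^\alpha I$ and $(\tfrac1n I)^{\alpha-1}=n^{1-\alpha}I$, whence it equals $n\tr D^2-2\tr D+\tfrac{d}{n}$. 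On the right-hand side, the sequence $\tfrac1n I$ converges to $0$ in $\mc L^+(\cH)$, so by Claim~\ref{kiterj} applied to $\phi^{-1}$ (legitimate by Remark~\ref{inverz}) the operators $\phi^{-1}\ler{\tfrac1n I}$ converge to $\psi^{-1}(0)=\psi^*(0)$, and hence their traces converge to the finite number $\tr\psi^{-1}(0)$. Letting $n\to\infty$, the right-hand side stays bounded while the left-hand side tends to $+\infty$ unless $\tr D^2=0$; since $D$ is positive semidefinite this forces $D=0$, that is $\psi(0)=0$.

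The only delicate point is precisely this passage to the limit: it is crucial that the extension $\psi^*=\psi^{-1}$ of the inverse map behaves continuously along sequences that approach the boundary of the cone from inside $\mc L^{++}(\cH)$, and this is exactly what Claim~\ref{kiterj} (for $\phi^{-1}$) guarantees. I expect no difficulty elsewhere; in particular no case distinction in $\alpha$ is required, because both key identities $K_\alpha(0||B)=\tr B$ and $K_\alpha\ler{D||\tfrac1n I}=n\tr D^2-2\tr D+\tfrac{d}{n}$ hold uniformly for every $\alpha\in[0,1]$. With $\psi(0)=0$ established, the displayed computation in the first paragraph closes the proof.
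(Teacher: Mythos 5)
Your proof is correct, and it takes a genuinely different route from the paper's. The paper proves Claim \ref{phitrace} without any reference to the extension $\psi$: it first observes that for $B,C\in\mc{L}^{++}(\cH)$ the set $\left\{K_\alpha\ler{X||B}-K_\alpha\ler{X||C} : X\in\mc{L}^{++}(\cH)\right\}$ is bounded from below if and only if $\tr B^{-\alpha}XB^{\alpha-1}X\geq\tr C^{-\alpha}XC^{\alpha-1}X$ holds for every $X$, in which case the infimum equals $\tr B-\tr C$; the L\"owner--Heinz theorem guarantees this boundedness whenever $B\leq C$, and since the bijection $\phi$ preserves $K_\alpha$, the infimum is transferred to the image side, yielding $\tr\phi(C)-\tr C=\delta$ with $\delta$ independent of $C$, after which bijectivity forces $\delta=0$. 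You instead exploit the extension machinery already in place (Claims \ref{kiterj}, \ref{psziinverz}, \ref{psziorz} and Remark \ref{inverz}) together with the identity $K_\alpha(0||B)=\tr B$, reducing the claim to $\psi(0)=0$, which you get by testing against the sequence $\frac1n I$ and letting $n\to\infty$. All your steps check out: the application of Claim \ref{psziorz} is legitimate since $0\in\mc{L}^+(\cH)$ and $\phi^{-1}\ler{\frac1n I}\in\mc{L}^{++}(\cH)$; the computation $K_\alpha\ler{D||\frac1n I}=n\tr D^2-2\tr D+\frac{d}{n}$ is right; and the boundedness of $\tr\phi^{-1}\ler{\frac1n I}$ indeed follows from Claim \ref{kiterj} applied to $\phi^{-1}$ (in fact you only need convergence of that sequence, so the full bijectivity statement of Claim \ref{psziinverz} is not essential). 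There is no circularity: Claims \ref{kiterj}--\ref{psziorz} nowhere use trace preservation. Comparing the two, your argument is shorter and avoids operator monotonicity entirely, but is only available once $\psi$ has been constructed; the paper's argument is self-contained on the cone $\mc{L}^{++}(\cH)$ (it would survive even without the extension step), at the price of invoking L\"owner--Heinz and a somewhat more elaborate infimum-transfer. One cosmetic remark: since $\psi$ extends $\phi$ and Claim \ref{psitrace} later derives trace preservation of $\psi$ from that of $\phi$, your argument actually proves both claims at once.
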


\begin{proof}
For any positive definite operators $B, C$ and $X$ on $\mc{H}$ we have
$$
K_\alpha \ler{X||B}-K_\alpha \ler{X||C}=\tr B^{-\alpha} X B^{\alpha-1}X-\tr C^{-\alpha} X C^{\alpha-1}X+\tr B-\tr C.
$$
Therefore, the set
$$
\left\{K_\alpha \ler{X||B}-K_\alpha \ler{X||C} \middle| X \in \mc{L}^{++}(\cH)\right\}
$$
is bounded from below if and only if the inequality
$$
\tr B^{-\alpha} X B^{\alpha-1}X \geq \tr C^{-\alpha} X C^{\alpha-1}X
$$
holds for every positive definite $X.$ Moreover, in this case we clearly have
$$
\mathrm{inf}\left\{K_\alpha \ler{X||B}-K_\alpha \ler{X||C} \middle| X \in \mc{L}^{++}(\cH)\right\}=\tr B-\tr C.
$$
Now, assume that $B \leq C$ holds for some positive definite operators $B$ and $C$ on $\mc{H}$, that is, $C-B$ is positive semidefinite. Then
$$\left\{K_\alpha \ler{X||B}-K_\alpha \ler{X||C} \middle| X \in \mc{L}^{++}(\cH)\right\}$$
is bounded from below. Indeed, by the L\"owner-Heinz theorem (see, e.g., \cite[Theorem 2.6]{carlen}), the map $t \mapsto t^p$ is operator monotone decreasing on $(0,\infty)$ for any $p \in [-1, 0].$ Therefore, $B^{-\alpha} \geq C^{-\alpha}$ and $B^{\alpha-1} \geq C^{\alpha-1}$ for any $\alpha \in [0,1].$ Consequently, $X^{\frac{1}{2}}B^{-\alpha} X^{\frac{1}{2}} \geq X^{\frac{1}{2}} C^{-\alpha} X^{\frac{1}{2}}$ and $X^{\frac{1}{2}} B^{\alpha-1} X^{\frac{1}{2}} \geq X^{\frac{1}{2}}C^{\alpha-1}X^{\frac{1}{2}}$ holds for any $X \in \mc{L}^{++}(\cH).$ It is folklore that an operator $A \in \mc{L}^{sa}(\cH)$ is positive semidefinite if and only if $\tr A T\geq 0$ for any $T \in \mc{L}^+(\cH).$ Therefore,
$$
\tr B^{-\alpha} X B^{\alpha-1}X=\tr \ler{X^{\frac{1}{2}}B^{-\alpha} X^{\frac{1}{2}}}\ler{X^{\frac{1}{2}}B^{\alpha-1} X^{\frac{1}{2}}}
$$
$$
\geq \tr \ler{X^{\frac{1}{2}}C^{-\alpha} X^{\frac{1}{2}}}\ler{X^{\frac{1}{2}}B^{\alpha-1} X^{\frac{1}{2}}}
\geq \tr \ler{X^{\frac{1}{2}}C^{-\alpha} X^{\frac{1}{2}}}\ler{X^{\frac{1}{2}}C^{\alpha-1} X^{\frac{1}{2}}}
$$
$$
=\tr C^{-\alpha} X C^{\alpha-1}X
$$
holds, so we have the required boundedness from below.
\par
Let us now pick some $C, D \in \mc{L}^{++}(\cH)$ and choose an $\eps>0$ such that $\eps I\leq C$ and $\eps I\leq D.$
Then, as we have seen above,
$$
\left\{K_\alpha \ler{X||\eps I}-K_\alpha \ler{X||C} \middle| X \in \mc{L}^{++}(\cH)\right\}
$$
is bounded from below and
$$
\mathrm{inf}\left\{K_\alpha \ler{X||\eps I}-K_\alpha \ler{X||C} \middle| X \in \mc{L}^{++}(\cH)\right\}=\tr \eps I-\tr C.
$$
Similarly,
$$
\left\{K_\alpha \ler{X||\eps I}-K_\alpha \ler{X||D} \middle| X \in \mc{L}^{++}(\cH)\right\}
$$
is bounded from below and
$$
\mathrm{inf}\left\{K_\alpha \ler{X||\eps I}-K_\alpha \ler{X||D} \middle| X \in \mc{L}^{++}(\cH)\right\}=\tr \eps I-\tr D.
$$
By the bijectivity of $\phi$ one can see that
$$
\left\{K_\alpha \ler{X||\eps I}-K_\alpha \ler{X||C} \middle| X \in \mc{L}^{++}(\cH)\right\}
$$
$$
=\left\{K_\alpha \ler{X||\phi\ler{\eps I}}-K_\alpha \ler{X||\phi\ler{C}} \middle| X \in \mc{L}^{++}(\cH)\right\},
$$
hence the latter set is also bounded from below, and its infimum is $\tr \eps I-\tr C.$ On the other hand, by the first observation of the present proof, this infimum is equal also to $\tr \phi\ler{\eps I}-\tr \phi(C).$ 
Hence we have
$$
\tr \eps I-\tr C=\tr \phi\ler{\eps I}-\tr \phi(C)
$$
and very similarly we get
$$
\tr \eps I-\tr D=\tr \phi\ler{\eps I}-\tr \phi(D).
$$
Consequently,
$$
\tr \phi(C)-\tr C=\tr \phi(D)-\tr D.
$$
The operators $C$ and $D$ were arbitrary, so we derive that
$$
\tr \phi(C)=\tr C+\delta \qquad \ler{C \in \mc{L}^{++}(\cH)}
$$
for some $\delta \in \R$ which is independent of $C.$ Clearly, $\delta<0$ is impossible and the bijectivity of $\phi$ excludes the possibility $\delta>0.$ So we infer that $\delta =0$ implying $\tr \phi(C)=\tr(C)$ for any $C \in \mc{L}^{++}(\cH).$
\end{proof}

\begin{claim} \label{psitrace}
The map $\psi: \mc{L}^{+}(\cH) \rightarrow \mc{L}^{+}(\cH)$ also preserves the trace, that is, we have
$$
\tr \psi(A)=\tr A \qquad \ler{A \in \mc{L}^{+}(\cH)}.
$$
\end{claim}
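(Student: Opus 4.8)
The plan is to reduce the statement for $\psi$ to the already-established trace-preservation of $\phi$ (Claim \ref{phitrace}) by a direct limiting argument, exploiting that the trace is continuous on the finite-dimensional space $\mc{L}^{sa}(\cH)$.

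First I would fix an arbitrary $A \in \mc{L}^+(\cH)$ and choose some sequence $\{A_n\}_{n\in\N}$ of positive definite operators with $\lim_{n\to\infty} A_n = A$; such a sequence clearly exists, for instance $A_n = A + (1/n)I$, which is positive definite for every $n$. By the definition \eqref{pszidef} of $\psi$ together with Claim \ref{kiterj} (which guarantees that $\{\phi(A_n)\}_{n\in\N}$ converges and that the limit is independent of the approximating sequence), we have $\psi(A) = \lim_{n\to\infty}\phi(A_n)$.

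Next I would invoke that the trace is a continuous linear functional on $\mc{L}^{sa}(\cH)$ (recall that this space is finite-dimensional, so the relevant topology is unambiguous). Combining this continuity with Claim \ref{phitrace}, which asserts $\tr\phi(A_n) = \tr A_n$ for every $n$, I would compute
$$
\tr\psi(A) = \tr\Big(\lim_{n\to\infty}\phi(A_n)\Big) = \lim_{n\to\infty}\tr\phi(A_n) = \lim_{n\to\infty}\tr A_n = \tr\Big(\lim_{n\to\infty}A_n\Big) = \tr A.
$$
Since $A$ was arbitrary, this establishes the claim.

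There is essentially no serious obstacle here; the only point one must be slightly careful about is the legitimacy of writing $\psi(A)$ as a genuine limit, which rests entirely on Claim \ref{kiterj}. Once that is in hand, the continuity of the trace does all the remaining work.
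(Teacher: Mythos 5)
Your argument is correct and coincides with the paper's own proof: both fix $A \in \mc{L}^{+}(\cH)$, approximate it by positive definite operators $A_n$, and combine the definition of $\psi$ (justified by Claim \ref{kiterj}) with Claim \ref{phitrace} and the continuity of the trace functional on the finite-dimensional space $\mc{L}^{sa}(\cH)$. No gaps; the explicit choice $A_n = A + (1/n)I$ is a harmless concretization of what the paper leaves implicit.
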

\begin{proof}
Let $A \in \mc{L}^{+}(\cH)$ and select a sequence
$\{A_n\}_{n \in \N}$ of positive definite operators on $\cH$ converging to $A$.
%such that $\tr A_n=\tr A$ for any $n \in \N.$ (Clearly, such a sequence exists, if $A \neq 0.$)
Then by the trace preserving property of the map $\phi$, Claim \ref{phitrace}, and by the continuity of the trace functional, we have
$$
\tr \psi(A)=\tr \lim_{n \to \infty} \phi\ler{A_n}=\lim_{n \to \infty} \tr \phi\ler{A_n}=\lim_{n \to \infty} \tr A_n =\tr A.
$$
%Finally, $\psi$ is surjective, hence $\psi(0)=0$ necessarily.
\end{proof}
\begin{remark}
Clearly, the transformation $\psi^{-1}$ also preserves the trace.
\end{remark}

\subsection{Proof of Theorem \ref{fo} --- part two} \label{2fej}

In this subsection let $\xi: \cS(\cH)\rightarrow \cS(\cH)$ be a bijective map such that $\xi(\cM(\cH))=\cM(\cH)$ and assume that
\begin{equation}\label{E:M3}
K_\alpha(\xi (A)|| \xi(B))=K_\alpha (A||B)
\end{equation}
holds for $A\in \mc{S}(\mc{H})$ and $B\in \mc{M}(\mc{H})$. 

In what follows we prove that $\xi$ equals a unitary or an antiunitary conjugation on $\mc{M}(\mc{H})$.

\begin{claim} \label{szigkonv}
Let $B \in \mc{L}^{++}(\cH)$ be fixed. The map
$$
K_\alpha(.||B): \, \mc{L}^+(\cH)\rightarrow [0,\infty); \, A \mapsto K_\alpha(A||B)=\tr B^{-\alpha}(A-B)B^{\alpha-1}(A-B)
$$
is strictly convex.
\end{claim}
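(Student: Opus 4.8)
The plan is to reduce the claim to the positive definiteness of a quadratic form. Using the identity recorded in Definition~\ref{D:1}, I would first write, for our fixed $B \in \mc{L}^{++}(\cH)$,
$$
K_\alpha(A||B) = \tr B^{-\alpha} A B^{\alpha-1} A - 2\tr A + \tr B,
$$
so that, as a function of $A$, it splits as a homogeneous quadratic term $Q(A) := \tr B^{-\alpha} A B^{\alpha-1} A$ plus the affine term $-2\tr A + \tr B$. Since an affine function is convex and adding it to a strictly convex function preserves strict convexity, it suffices to prove that $Q$ is strictly convex on the convex cone $\mc{L}^+(\cH)$.

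Next I would exhibit $Q$ as a squared Hilbert--Schmidt norm, exactly as in Remark~\ref{R:M}: for any selfadjoint $H$,
$$
Q(H) = \tr B^{-\alpha} H B^{\alpha-1} H = \tr\left| B^{\frac{\alpha-1}{2}} H B^{-\frac{\alpha}{2}} \right|^2 = \norm{B^{\frac{\alpha-1}{2}} H B^{-\frac{\alpha}{2}}}_{HS}^2 .
$$
Because $B$ is positive definite, the operators $B^{\frac{\alpha-1}{2}}$ and $B^{-\frac{\alpha}{2}}$ are invertible, hence $Q(H) \geq 0$ with equality precisely when $H = 0$. In other words, $Q$ is a positive definite quadratic form on the real linear space $\mc{L}^{sa}(\cH)$.

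Finally I would invoke the elementary identity valid for any quadratic form: for $A_0, A_1 \in \mc{L}^+(\cH)$ and $t \in (0,1)$,
$$
t\, Q(A_0) + (1-t)\, Q(A_1) - Q\ler{t A_0 + (1-t) A_1} = t(1-t)\, Q(A_0 - A_1),
$$
which follows by expanding $Q$ along the segment joining $A_0$ and $A_1$. If $A_0 \neq A_1$, then $A_0 - A_1$ is a nonzero selfadjoint operator, so by the previous paragraph $Q(A_0 - A_1) > 0$; together with $t(1-t) > 0$ this yields the strict inequality
$$
Q\ler{t A_0 + (1-t) A_1} < t\, Q(A_0) + (1-t)\, Q(A_1),
$$
and hence the strict convexity of $K_\alpha(.||B)$.

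I do not expect a genuine obstacle here: the whole content is the norm representation of $Q$, which is already available from Remark~\ref{R:M}. The only point to keep an eye on is that the difference of two \emph{distinct} positive semidefinite operators is selfadjoint and nonzero, so that the positive definiteness of $Q$ (verified on all of $\mc{L}^{sa}(\cH)$, not merely on the positive cone) genuinely applies to it.
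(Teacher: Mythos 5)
Your proof is correct and follows essentially the same route as the paper: the same decomposition of $K_\alpha(\cdot||B)$ into the homogeneous quadratic term $\tr B^{-\alpha}AB^{\alpha-1}A$ plus an affine term, and the same Hilbert--Schmidt norm representation of the quadratic part. The only difference is in the finishing step: where the paper simply cites strict convexity of the (squared) Hilbert--Schmidt norm, you verify it via the explicit quadratic-form identity $tQ(A_0)+(1-t)Q(A_1)-Q(tA_0+(1-t)A_1)=t(1-t)Q(A_0-A_1)$, which has the mild virtue of making fully explicit where the invertibility of $B$ (hence the positive definiteness of the form on all of $\mc{L}^{sa}(\cH)$) enters.
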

\begin{proof} Indeed, we have
$$
K_\alpha(A||B)=\tr B^{-\alpha} A B^{\alpha-1}A-2 \tr A+\tr B
=\norm{B^{-\frac{\alpha}{2}}A B^{\frac{\alpha-1}{2}}}_{HS}^2-\ler{2 \tr A-\tr B}.
$$
The first term is strictly convex in $A$ since the Hilbert-Schmidt norm is strictly convex, and the second term is affine in $A$. This implies the assertion.
\end{proof}

\begin{claim} \label{pszibij}
The map $\xi$ restricted to $\cP_1(\cH)$ is a bijection from $\cP_1(\cH)$ onto itself.
\end{claim}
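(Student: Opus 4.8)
The plan is to identify the rank-one projections intrinsically as the states of maximal purity, $\cP_1(\cH)=\{A\in\cS(\cH):\tr A^2=1\}$, and to prove that $\xi$ preserves the functional $A\mapsto\tr A^2$. The bridge to the divergence is the elementary identity, valid for every $A\in\cS(\cH)$,
$$
K_\alpha(A|| \tfrac{1}{d}I)=d\,\tr A^2-1,
$$
which one reads off from the second displayed formula in Definition~\ref{D:1} after substituting $B=\frac{1}{d}I$ (so that $B^{-\alpha}=d^{\alpha}I$ and $B^{\alpha-1}=d^{1-\alpha}I$) and using $\tr A=\tr\frac{1}{d}I=1$. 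Consequently, \emph{once} I know that $\xi(\frac{1}{d}I)=\frac{1}{d}I$, the hypothesis \eqref{E:M3} applied with the nonsingular second argument $B=\frac{1}{d}I\in\cM(\cH)$ gives $d\,\tr\xi(A)^2-1=d\,\tr A^2-1$, i.e.\ $\tr\xi(A)^2=\tr A^2$ for all $A\in\cS(\cH)$. Since then $\tr\xi(A)^2=1\iff\tr A^2=1$ and $\xi$ is a bijection of $\cS(\cH)$, the map $\xi$ sends $\cP_1(\cH)$ onto $\cP_1(\cH)$ bijectively, which is exactly the assertion.

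The crux is therefore to show that $\xi$ fixes the maximally mixed state $\frac{1}{d}I$. To this end I would introduce the auxiliary functional $M(C):=\max_{A\in\cS(\cH)}K_\alpha(A||C)$ for $C\in\cM(\cH)$; the maximum is attained since $\cS(\cH)$ is compact and $K_\alpha(.||C)$ is continuous for nonsingular $C$ by Remark~\ref{mjfoly}. This functional is $\xi$-invariant: as $\xi$ is a bijection of $\cS(\cH)$ carrying $\cM(\cH)$ onto itself and satisfying \eqref{E:M3}, letting $A$ range over $\cS(\cH)$ yields $\{K_\alpha(A||C):A\in\cS(\cH)\}=\{K_\alpha(A'||\xi(C)):A'\in\cS(\cH)\}$, hence $M(C)=M(\xi(C))$ for every $C\in\cM(\cH)$.

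The key extremal computation is that $\frac{1}{d}I$ is the \emph{unique} minimizer of $M$ over $\cM(\cH)$, with $M(\frac{1}{d}I)=d-1$. The value $d-1$ follows from the identity above together with $\max_{A\in\cS(\cH)}\tr A^2=1$. For the lower bound, write $C=\sum_{i=1}^{d}c_i\ket{e_i}\bra{e_i}$ with $c_i>0$ and $\sum_i c_i=1$, and probe $M(C)$ with the eigen-pure-states $\ket{e_i}\bra{e_i}$: a direct evaluation gives $K_\alpha(\ket{e_i}\bra{e_i}||C)=c_i^{-1}-1$, independently of $\alpha$, because the two factors $c_i^{-\alpha}$ and $c_i^{\alpha-1}$ multiply to $c_i^{-1}$. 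Since $M(C)$ dominates the average of these test values, the arithmetic--harmonic mean inequality $\frac{1}{d}\sum_i c_i^{-1}\ge d$ (with equality iff all $c_i=\frac{1}{d}$) gives $M(C)\ge\frac{1}{d}\sum_i(c_i^{-1}-1)\ge d-1$, the second inequality being strict unless $C=\frac{1}{d}I$. Combining invariance with uniqueness, $M(\xi(\frac{1}{d}I))=M(\frac{1}{d}I)=d-1$ together with $\xi(\frac{1}{d}I)\in\cM(\cH)$ forces $\xi(\frac{1}{d}I)=\frac{1}{d}I$, closing the loop.

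The main obstacle is precisely this extremal/uniqueness step for $M$: everything else (continuity, compactness, bijectivity, and the purity reduction) is routine, but pinning down $\frac{1}{d}I$ requires isolating the right $\xi$-invariant scalar. The fortunate point is that testing $M(C)$ only against the eigen-pure-states of $C$ already produces the $\alpha$-independent numbers $c_i^{-1}-1$, so that a single AM--HM inequality delivers both the bound and the equality case uniformly in $\alpha\in[0,1]$. I would be careful to use merely $M(C)\ge$ (average of the test values), so that neither strict convexity (Claim~\ref{szigkonv}) nor the exact location of the maximizer enters the argument.
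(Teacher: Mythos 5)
Your proof is correct, but it takes a genuinely different route from the paper's. The paper characterizes $\cP_1(\cH)$ as the set of maximizers of $K_\alpha\ler{\cdot \middle|\middle| \frac{1}{d}I}$ over $\cS(\cH)$: strict convexity (Claim \ref{szigkonv}) forces any maximizer to be an extreme point of $\cS(\cH)$, while unitary invariance of $K_\alpha$ shows every rank-one projection attains the maximum; this characterization is then transported by $\xi$, using only that $K_\alpha\ler{\cdot\middle|\middle|\xi\ler{\frac{1}{d}I}}$ is again strictly convex --- notably, the paper never needs to locate $\xi\ler{\frac{1}{d}I}$. You instead first pin down $\xi\ler{\frac{1}{d}I}=\frac{1}{d}I$ by showing the maximally mixed state is the unique minimizer over $\cM(\cH)$ of the $\xi$-invariant functional $M(C)=\max_{A \in \cS(\cH)}K_\alpha(A||C)$, exploiting the $\alpha$-independent probe values $K_\alpha\ler{\ket{e_i}\bra{e_i}\middle|\middle|C}=c_i^{-1}-1$ together with the AM--HM inequality, and then use the identity $K_\alpha\ler{A\middle|\middle|\frac{1}{d}I}=d\tr A^2-1$ to get $\tr \xi(A)^2=\tr A^2$ on all of $\cS(\cH)$. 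Your steps all check out: the probe computation is valid because eigenprojections commute with $C$; $M$ is finite and attained by Remark \ref{mjfoly} and compactness of $\cS(\cH)$; the invariance $M(C)=M(\xi(C))$ uses exactly the hypotheses \eqref{E:M3} and surjectivity of $\xi$ on $\cS(\cH)$; and the standing assumption $\xi(\cM(\cH))=\cM(\cH)$ is precisely what lets you apply the uniqueness of the minimizer to $\xi\ler{\frac{1}{d}I}$. As for what each approach buys: the paper's argument is shorter given that Claim \ref{szigkonv} is already available, and it avoids any extremal uniqueness analysis; yours dispenses with convexity entirely (as you note) and yields two stronger by-products --- that $\xi$ fixes the maximally mixed state and that it preserves the purity $\tr A^2$ on the whole state space --- invariants the paper only recovers in weaker, spectral form later via the min/max formulas \eqref{minertek}--\eqref{maxelv} over rank-one projections.
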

\begin{proof}
We have seen in Remark \ref{mjfoly} and Claim \ref{szigkonv} that the map $K_\alpha(.||B): \, \mc{L}^+(\cH)\rightarrow [0,\infty); \, A \mapsto K_\alpha(A||B)$ is continuous and strictly convex on $\mc{L}^+(\cH)$ for any fixed $B \in \mc{L}^{++}(\cH).$ Therefore, so is the restriction of $K_\alpha(.||B)$ to the compact and convex set $\cS(\cH) \subset \mc{L}^+(\cH)$ of states.
\par
Recall that $d$ denotes the dimension of the Hilbert space $\cH.$
On the one hand, if
$$
K_\alpha\ler{P \middle|\middle| \frac{1}{d}I}= \mathrm{max}\left\{K_\alpha\ler{X \middle|\middle| \frac{1}{d}I}\middle|X \in \cS(\cH)\right\}
$$
holds for some $P \in \cS(\cH)$, then $P$ is an extremal point of $\cS(\cH)$ by the strict convexity of $K_\alpha\ler{.\middle|\middle|\frac{1}{d}I}.$ This implies that $P \in \cP_1(\cH).$ On the other hand, for any $P,Q \in \cP_1(\cH)$ there exists some unitary $U \in \mc{L}(\cH)$ such that $Q=UPU^*.$ By the clear unitary invariance of the $\chi_\alpha^2$-divergence, we have
$$
K_\alpha\ler{Q \middle|\middle| \frac{1}{d}I}=K_\alpha\ler{UPU^* \middle|\middle| U\frac{1}{d}I U^*}=K_\alpha\ler{P \middle|\middle| \frac{1}{d}I}.
$$
Therefore, $K_\alpha\ler{.\middle|\middle|\frac{1}{d}I}$ is constant on $\cP_1(\cH)$ which means that for any $P \in \cP_1(\cH)$ we have
\be \label{p1kar}
K_\alpha\ler{P \middle|\middle| \frac{1}{d}I}= \mathrm{max}\left\{K_\alpha\ler{X \middle|\middle| \frac{1}{d}I}\middle|X \in \cS(\cH)\right\}.
\ee
We deduce that for $P \in \cS(\cH)$ we have $P \in \cP_1(\cH)$ if and only if \eqref{p1kar} holds.
\par
By the preserver property of $\xi$ we have
$$
K_\alpha\ler{X \middle|\middle| \frac{1}{d}I}=K_\alpha\ler{\xi(X) \middle|\middle| \xi\ler{\frac{1}{d}I}}
$$
for any $X \in \cS(\cH).$ Therefore,
$$
P \in \cP_1(\cH) \Rightarrow K_\alpha\ler{P \middle|\middle| \frac{1}{d}I}= \mathrm{max}\left\{K_\alpha\ler{X \middle|\middle| \frac{1}{d}I}\middle|X \in \cS(\cH)\right\}
$$
$$
\Rightarrow K_\alpha\ler{\xi(P) \middle|\middle| \xi \ler{\frac{1}{d}I}}= \mathrm{max}\left\{K_\alpha\ler{X \middle|\middle| \xi \ler{\frac{1}{d}I}}\middle|X \in \cS(\cH)\right\}
$$
$$
\Rightarrow \xi(P) \in \cP_1(\cH),
$$
because the map $K_\alpha(.||\xi(\frac{1}{d}I))$ is also strictly convex by Claim \ref{szigkonv}.
\par
Consequently, we obtain that $\xi\ler{\cP_1(\cH)} \subseteq \cP_1(\cH).$ In the above argument we can replace $\xi$ by $\xi^{-1},$ hence $\xi^{-1}\ler{\cP_1(\cH)} \subseteq \cP_1(\cH)$ also holds. This means that $\xi$ maps $\cP_1(\cH)$ bijectively onto itself.
\end{proof}

We note that the computation rule
\be \label{szamszab}
\tr RXRY=\tr RX \cdot \tr RY
\ee
can be verified by easy computation for any operators $X, Y \in \mc{L}(\cH)$ and for any rank one projection $R \in \cP_1(\cH).$ This will be used in the sequel several times.
\par
For the sake of simplicity, let us introduce the notation
$$
K_\alpha^*(X||Y):=K_\alpha(X||Y)+1 \qquad \ler{X\in \mc{L}^+(\cH), Y \in  \mc{L}^{++}(\cH)}.
$$
An easy but useful consequence of \eqref{szamszab} is that for any nonsingular density operator $D$ with spectral resolution
\be \label{ddef}
\cM(\cH) \ni D=\sum_{j=1}^m\lambda_j P_j, \enskip \lambda_1>\lambda_2>\dots>\lambda_m>0, \enskip \sum_{j=1}^m   \lambda_j \mathrm{rank}  \ler{P_j} =1
\ee
we have for every rank-one projection $R$ on $\mc{H}$ that
\begin{equation}\label{szamszab2}
\begin{gathered}
K_\alpha^*(R||D)=\tr R D^{-\alpha} R D^{\alpha-1}=\tr R D^{-\alpha} \cdot \tr R D^{\alpha-1}
\\
=\ler{\sum_{j=1}^m (\tr R P_j) \lambda_j^{-\alpha}} \ler{\sum_{k=1}^m (\tr R P_k) \lambda_k^{\alpha-1}}.
\end{gathered}
\end{equation}
The formula \eqref{szamszab2} clearly shows that
\be \label{minertek}
\mathrm{min}\left\{K_\alpha^*\ler{X || D} \middle| X \in \cP_1(\cH)\right\}=\frac{1}{\lambda_1}
\ee
and
\be \label{maxertek}
\mathrm{max}\left\{K_\alpha^*\ler{X || D} \middle| X \in \cP_1(\cH)\right\}=\frac{1}{\lambda_m}.
\ee
 Moreover, we have
\be \label{minelv}
K_\alpha^*(R||D)=\mathrm{min}\left\{K_\alpha^*\ler{X || D} \middle| X \in \cP_1(\cH)\right\}
\text{ if and only if }R \leq P_1,
\ee
and
\be \label{maxelv}
K_\alpha^*(R||D)=\mathrm{max}\left\{K_\alpha^*\ler{X || D} \middle| X \in \cP_1(\cH)\right\}
\text{ if and only if }R \leq P_m.
\ee

\begin{claim} \label{ortorz}
The map $\xi_{|\cP_1(\cH)}: \cP_1(\cH) \rightarrow \cP_1(\cH)$ preserves orthogonality in both directions.
\end{claim}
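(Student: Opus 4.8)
The plan is to give a spectral characterization of orthogonality of two rank-one projections purely in terms of the quantities $K_\alpha^*(\cdot||D)$, and then to transport it through $\xi$. Concretely, I claim that for $P,Q\in\cP_1(\cH)$ one has $P\perp Q$ if and only if there exists a non-scalar $D\in\cM(\cH)$ for which $P$ attains the minimum and $Q$ attains the maximum of $K_\alpha^*(\cdot||D)$ over $\cP_1(\cH)$. Every ingredient of this characterization is visibly compatible with $\xi$, which is what will close the argument.

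First I would establish the characterization. For the ``only if'' direction, assume $P\perp Q$. Then $P\cH$ and $Q\cH$ are orthogonal one-dimensional subspaces, so I can build an orthonormal eigenbasis extending them and assign $P$ the largest eigenvalue $\lambda_1$, $Q$ the smallest eigenvalue $\lambda_m$, placing the remaining eigenvalues strictly in between on $(P\cH\oplus Q\cH)^\perp$ and rescaling so that the trace equals $1$; this yields a non-scalar $D\in\cM(\cH)$ whose top spectral projection is $P$ and whose bottom one is $Q$. By \eqref{minelv} and \eqref{maxelv}, $P$ minimizes and $Q$ maximizes $K_\alpha^*(\cdot||D)$ over $\cP_1(\cH)$. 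For the ``if'' direction, suppose such a $D$ exists, with spectral resolution as in \eqref{ddef}. Then \eqref{minelv} gives $P\leq P_1$ and \eqref{maxelv} gives $Q\leq P_m$; since $P_1$ and $P_m$ are spectral projections of $D$ belonging to the distinct eigenvalues $\lambda_1>\lambda_m$ (distinct because $D$ is non-scalar), they are orthogonal, whence $P\perp Q$.

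Next I would verify that $\xi$ respects this characterization. Since $\xi$ preserves $K_\alpha^*$ on pairs whose first entry lies in $\cS(\cH)$ and whose second lies in $\cM(\cH)$, maps $\cM(\cH)$ onto itself, and restricts to a bijection of $\cP_1(\cH)$ by Claim \ref{pszibij}, for every $D\in\cM(\cH)$ one gets
$$
\min_{Y\in\cP_1(\cH)}K_\alpha^*(Y||\xi(D))=\min_{X\in\cP_1(\cH)}K_\alpha^*(X||D),
$$
together with the analogous identity for the maxima. In particular $\xi(D)$ is non-scalar exactly when $D$ is, and $P$ is a minimizer (resp.\ maximizer) of $K_\alpha^*(\cdot||D)$ precisely when $\xi(P)$ is a minimizer (resp.\ maximizer) of $K_\alpha^*(\cdot||\xi(D))$. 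Hence, if $P\perp Q$, choosing $D$ as in the ``only if'' construction and transporting by $\xi$ shows that $\xi(P)$ and $\xi(Q)$ are the minimizer and maximizer of $K_\alpha^*(\cdot||\xi(D))$ for the non-scalar $\xi(D)$, so $\xi(P)\perp\xi(Q)$ by the ``if'' part of the characterization. Because $\xi^{-1}$ satisfies exactly the same hypotheses (it is a bijection of $\cS(\cH)$ preserving $\cM(\cH)$ and preserving the divergence on the relevant pairs), the identical argument applied to $\xi^{-1}$ gives the converse implication, proving preservation of orthogonality in both directions.

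The main obstacle I anticipate is the ``only if'' construction: one must produce $D$ so that $P$ and $Q$ are genuinely the top and bottom spectral projections, which forces some care in placing the intermediate eigenvalues strictly between $\lambda_1$ and $\lambda_m$ on the complementary subspace while keeping the trace equal to $1$ and $D$ non-scalar. In dimension two this is immediate (take $D=\lambda_1 P+\lambda_2 Q$ with $\lambda_1\neq\lambda_2$); in higher dimensions the only delicate point is the bookkeeping of these intermediate eigenvalues.
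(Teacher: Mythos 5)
Your proposal is correct and follows essentially the same route as the paper: the paper's proof takes exactly your ``only if'' witness $D=\lambda P+\nu\ler{I-(P+Q)}+\mu Q$ with $1>\lambda>\nu>\mu>0$, invokes \eqref{minertek}--\eqref{maxelv} to identify $P$ and $Q$ as the unique minimizer and maximizer of $K_\alpha^*(\cdot||D)$ over $\cP_1(\cH)$, transports this through $\xi$ via Claim \ref{pszibij}, and includes your non-scalarity observation (the extreme eigenvalues of $\xi(D)$ cannot coincide since the values $1/\lambda\neq 1/\mu$ are both attained). Your packaging of the argument as an explicit two-way spectral characterization, and the reduction of the ``both directions'' claim to the symmetry between $\xi$ and $\xi^{-1}$, likewise mirror the paper, so no further comparison is needed.
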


\begin{proof} 
Clearly, since $\xi$ and $\xi^{-1}$ have similar properties, it is enough to prove that $\xi$ preserves orthogonality only in one direction, i.e., it maps orthogonal rank-one projections to orthogonal ones.

Select $P, Q \in \cP_1(\cH)$ such that $PQ=0,$ and let
\be \label{bdef}
B:=\lambda P + \nu (I-(P+Q))+\mu Q,
\ee
where $1>\lambda>\nu>\mu>0,$ and $\lambda +(d-2) \nu+\mu=1.$ (Recall that $d$ denotes the dimension of the Hilbert space $\cH.$)
The operator $B$ defined in \eqref{bdef} is a nonsingular element of $\cS(\cH),$ hence by \eqref{minertek} and \eqref{minelv}, we have
$$
\mathrm{min}\left\{K_\alpha^*\ler{X || B} \middle| X \in \cP_1(\cH)\right\}=\frac{1}{\lambda},
$$
and the minimum is taken if and only if $X=P,$
and by \eqref{maxertek} and \eqref{maxelv} we also know that
$$
\mathrm{max}\left\{K_\alpha^*\ler{X || B} \middle| X \in \cP_1(\cH)\right\}=\frac{1}{\mu},
$$
and the maximum is taken if and only if $X=Q.$
\par
The transformation $\xi$ maps the set $\cP_1(\cH)$ bijectively onto itself, see Claim \ref{pszibij}, and satisfies \eqref{E:M3}. Hence
$$
\mathrm{min}\left\{K_\alpha^*\ler{X || \xi(B)} \middle| X \in \cP_1(\cH)\right\}=\frac{1}{\lambda}=K_\alpha^*\ler{\xi(P) || \xi(B)}
$$
and
$$
\mathrm{max}\left\{K_\alpha^*\ler{X || \xi(B)} \middle| X \in \cP_1(\cH)\right\}=\frac{1}{\mu}=K_\alpha^*\ler{\xi(Q) || \xi(B)}.
$$
Clearly, $\xi(B)$ is an invertible density operator. By \eqref{minelv} and \eqref{maxelv}, $K_\alpha^*\ler{R || \xi(B)}$ is minimal if and only if $R \leq P_\gamma,$ where $P_\gamma$ stands for the eigenprojection of $\xi(B)$ corresponding to the greatest eigenvalue, and $K_\alpha^*\ler{R || \xi(B)}$ is maximal if and only if $R \leq P_\sigma,$ where $P_\sigma$ stands for the eigenprojection of $\xi(B)$ corresponding to the smallest eigenvalue. (Observe that the greatest and the smallest eigenvalues of $\xi(B)$ can not coincide since $K_\alpha^*\ler{X || \xi(B)}$ takes the different values $1/\lambda , 1/\mu$ as $X$ runs through the set of rank-one projections.)
It follows that $\xi(P)$ and $\xi(Q)$ are subprojections of two different eigenprojections of $\xi(B)$ and hence we have $\xi(P)\xi(Q)=0.$ 
\end{proof}

\begin{claim} \label{trans_prob}
The map $\xi_{|\cP_1(\cH)}: \cP_1(\cH) \rightarrow \cP_1(\cH)$ preserves the transition probabilities meaning that it satisfies
$$
\tr \xi(P) \xi(R)= \tr PR \qquad \ler{P, R \in \cP_1(\cH)}.
$$
\end{claim}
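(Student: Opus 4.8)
The plan is to recover the transition probability $\tr PR$ from the value of a single $\chi_\alpha^2$-divergence between $R$ and a two-level reference state manufactured from $P$, and then to show that $\xi$ carries this reference state to the analogous state manufactured from $\xi(P)$. Concretely, I would fix $P,R\in\cP_1(\cH)$, choose reals $\lambda>\mu>0$ with $\lambda+(d-1)\mu=1$, and set $D_P:=\lambda P+\mu(I-P)$, which is a nonsingular density operator. Since $D_P^{-\alpha}=\lambda^{-\alpha}P+\mu^{-\alpha}(I-P)$ and $D_P^{\alpha-1}=\lambda^{\alpha-1}P+\mu^{\alpha-1}(I-P)$, the rule \eqref{szamszab2} applied to the spectral resolution $D_P=\lambda P+\mu(I-P)$ gives, with $x:=\tr PR$,
$$K_\alpha^*(R||D_P)=\ler{\lambda^{-\alpha}x+\mu^{-\alpha}(1-x)}\ler{\lambda^{\alpha-1}x+\mu^{\alpha-1}(1-x)}=:g(x).$$
Because $\lambda>\mu$, each of the two affine factors is strictly positive and non-increasing on $[0,1]$, and at least one is strictly decreasing (the first whenever $\alpha>0$, the second whenever $\alpha<1$); hence $g$ is strictly decreasing, and in particular injective, on $[0,1]$.

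The key step, which I expect to be the main obstacle, is to prove that $\xi(D_P)=\lambda\,\xi(P)+\mu\bigl(I-\xi(P)\bigr)$. Since $\xi(\cM(\cH))=\cM(\cH)$, the operator $\xi(D_P)$ is a nonsingular density, so the extremal principles \eqref{minertek}, \eqref{maxertek}, \eqref{minelv}, \eqref{maxelv} are available for it. Combining the preserver relation \eqref{E:M3} (which also preserves $K_\alpha^*$) with the bijectivity of $\xi$ on $\cP_1(\cH)$ from Claim \ref{pszibij} and with the extremal values for $D_P$, one sees that the minimum of $K_\alpha^*(\cdot||\xi(D_P))$ over $\cP_1(\cH)$ equals $1/\lambda$ and is attained exactly at $X=\xi(P)$, while the maximum equals $1/\mu$ and is attained exactly at the $\xi$-images of the rank-one projections orthogonal to $P$. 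By \eqref{minelv} the uniqueness of the minimiser forces the top eigenprojection of $\xi(D_P)$ to be the rank-one projection $\xi(P)$, with eigenvalue $\lambda$. By \eqref{maxelv} the maximisers are precisely the rank-one subprojections of the bottom eigenprojection of $\xi(D_P)$; on the other hand, the orthogonality preservation from Claim \ref{ortorz} identifies the $\xi$-images of the rank-one projections orthogonal to $P$ with the rank-one projections orthogonal to $\xi(P)$, i.e. those dominated by $I-\xi(P)$. Hence the bottom eigenprojection is $I-\xi(P)$, with eigenvalue $\mu$. As $\xi(P)$ and $I-\xi(P)$ are complementary, no intermediate eigenvalues can occur, and the asserted resolution of $\xi(D_P)$ follows.

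With this in hand the conclusion is immediate. Running the computation of the first paragraph with $\xi(P)$ in place of $P$ on the state $\xi(D_P)=\lambda\xi(P)+\mu(I-\xi(P))$ yields $K_\alpha^*(\xi(R)||\xi(D_P))=g\bigl(\tr\xi(P)\xi(R)\bigr)$, whereas the preserver property \eqref{E:M3} gives $K_\alpha^*(\xi(R)||\xi(D_P))=K_\alpha^*(R||D_P)=g(\tr PR)$. Since $g$ is injective on $[0,1]$, we obtain $\tr\xi(P)\xi(R)=\tr PR$, as desired. Thus the real heart of the argument is the spectral identification of $\xi(D_P)$ in the second paragraph; once that is established, the injectivity of the explicit one-variable function $g$ together with \eqref{E:M3} completes the proof with no further work.
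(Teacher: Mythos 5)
Your proposal is correct and takes essentially the same route as the paper's own proof: the same two-level reference state $\lambda P+\mu(I-P)$, the same spectral identification of its $\xi$-image as $\lambda\,\xi(P)+\mu\bigl(I-\xi(P)\bigr)$ via the extremal principles \eqref{minertek}--\eqref{maxelv} combined with the orthogonality preservation of Claim \ref{ortorz}, and the same concluding step comparing $K_\alpha^*$ values through \eqref{szamszab2}. Your explicit check that $g$ is strictly decreasing on $[0,1]$ just spells out the monotonicity verification the paper leaves to the reader.
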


\begin{proof}
Let $P\in \cP_1(\cH)$ and set
\begin{equation*} \label{cdef}
C:=\lambda P + \mu (I-P),
\end{equation*}
where $1>\lambda>\mu>0,$ and $\lambda +(d-1)\mu=1.$
Let $R \in \cP_1(\cH).$ Then
$$
R=P \Longleftrightarrow K_\alpha^*\ler{R||C}=\mathrm{min}\left\{K_\alpha^*\ler{X || C} \middle| X \in \cP_1(\cH)\right\}
$$
$$
\Longleftrightarrow K_\alpha^*\ler{\xi(R)||\xi(C)}=\mathrm{min}\left\{K_\alpha^*\ler{X || \xi(C)} \middle| X \in \cP_1(\cH)\right\}.
$$
This means that $\xi(P)$ is the one and only rank-one projection which is majorized by the eigenprojection of $\xi(C)$ corresponding to the greatest eigenvalue.
\par
We can easily see by \eqref{maxertek}, \eqref{maxelv} that $K_\alpha^*(R||C)=\frac{1}{\mu}$ for any $R \in \cP_1(\cH)$ which is orthogonal to $P.$ On the other hand, by Claim \ref{ortorz}, we have $PR=0 \Longleftrightarrow \xi(P)\xi(R)=0,$ so $K_\alpha^*\ler{X||\xi(C)}=\frac{1}{\mu}$ holds for any rank-one projection $X$ which is orthogonal to $\xi(P).$ This means by \eqref{minertek}-\eqref{maxelv} that
$$
\xi(C)= \lambda \xi(P) +\mu \ler{I-\xi(P)}.
$$
By \eqref{szamszab2} we get that
\begin{equation*}\label{szam_spec_2}
\begin{gathered}
K_\alpha^*\ler{\xi(R) || \xi(C)}
\\
=\ler{\tr \xi(R)\xi(P) \lambda^{-\alpha} +(1-\tr \xi(R) \xi(P))\mu^{-\alpha}}\times
\\ \times \ler{\tr \xi(R)\xi(P)\lambda^{\alpha-1}  +(1-\tr \xi(R) \xi(P))\mu^{\alpha-1}}
\end{gathered}
\end{equation*}
for any $R \in \cP_1(\cH).$
Comparing this to another consequence of \eqref{szamszab2}, namely, 
\begin{equation*}\label{szam_spec_1}
\begin{gathered}
K_\alpha^*\ler{R || C}\\=\ler{\tr R P \lambda^{-\alpha} +(1-\tr R P)\mu^{-\alpha}} \ler{\tr R P \lambda^{\alpha-1}  +(1-\tr R P)\mu^{\alpha-1}},
\end{gathered}
\end{equation*}
from the equality $K_\alpha^*\ler{\xi(R) || \xi(C)}=K_\alpha^*\ler{R || C}$ we can deduce that 
$$\tr \xi(P) \xi(R)= \tr PR$$ 
holds for any $R \in \cP_1(\cH).$ Indeed, to see this, it is enough to check that $K_\alpha^*\ler{\xi(R) || \xi(C)}$ is strictly monotone decreasing in $\tr \xi(P) \xi(R)$ and $K_\alpha^*\ler{R || C}$ is strictly monotone decreasing in $\tr P R.$
\end{proof}

Let us now recall Wigner's famous theorem on the structure of quantum mechanical symmetry transformations.
It states that any bijection of $\cP_1(\cH)$ onto itself which preserves transition probabilities (i.e., preserves the trace of the products of rank-one projections) is necessarily implemented by a unitary or an antiunitary operator. Therefore, we get that
\begin{equation*} \label{p1en}
\xi_{|\cP_1(\cH)}(R)=U R U^* \qquad  \ler{R \in \cP_1(\cH)}
\end{equation*}
for some unitary or antiunitary operator $U$ acting on $\cH.$
\par
We intend to show that $\xi(A)=UA U^*$ holds for any $A \in \cM(\cH).$ We mention that the core idea of the proof of this step appeared in \cite{lm08}, and that technique was further developed in \cite{dv16a}.
Let us define the map $\xi': \cS(\cH) \rightarrow \cS(\cH)$ by
$$
\xi'(A):=U^*\xi(A)U \qquad \ler{A \in \cS(\cH)}.
$$
By the assumptions, $\xi'$ has the same properties as $\xi$ plus it has the additional property that it acts identically on the set $\cP_1(\cH).$
Therefore,
\be \label{rezol}
K_\alpha\ler{P||\xi'(A)}=K_\alpha\ler{\xi'(P)||\xi'(A)}= K_\alpha\ler{P||A}
\ee
holds for any $A \in \cM(\cH)$ and for any $P \in \cP_1(\cH).$
Considering the equation \eqref{minertek}, it is clear by \eqref{rezol} that that the greatest eigenvalues of $A$ and $\xi'(A)$ coincide and, by \eqref{minelv}, it is also clear that the eigenprojections corresponding to the greatest eigenvalues coincide, too.
\par
The formula \eqref{szamszab2} shows that, similarly to the equations \eqref{minertek} and \eqref{minelv}, the following holds (here we use the notation of \eqref{ddef}, $A$ being in the place of $D$):
\begin{equation*} \label{minertek2}
\mathrm{min}\left\{K_\alpha^*\ler{X || A} \middle| X \in \cP_1(\cH), X P_1=0\right\}=\frac{1}{\lambda_2}
\end{equation*}
and
\begin{equation*} \label{minelv2}
\begin{gathered}
K_\alpha^*(R||A)=\mathrm{min}\left\{K_\alpha^*\ler{X || A} \middle| X \in \cP_1(\cH), X P_1=0\right\}\\
\Longleftrightarrow RP_1=0, R \leq P_2.
\end{gathered}
\end{equation*}
By \eqref{rezol} this means that the second greatest eigenvalues of $A$ and $\xi'(A)$ coincide, and so do the corresponding eigenprojections.
Continuing this process, after finitely many steps we get that $\xi'(A)=A.$ Therefore, $\xi'$ acts as the identity on the set of nonsingular densities, as well.
This means that $\xi(A)= UA U^*$ for any $A\in \mathcal M(\mc{H})$ as asserted in the beginning of this subsection.

%%%%%%%%%%%%%%%%%%%%%%%%%%%%%%%%%%%%%%%%%%%%5%%%%%%%%%%%%%
\subsection{Proof of Theorem \ref{fo} --- part three} \label{3fej}

We are now in  a position to complete the proof of Theorem \ref{fo}. In what follows let $\phi$, $\psi$ be as in Subsection \ref{1fej}. Introduce the notation
$$
\mc{L}^+(\cH)_\lambda :=\left\{A \in \mc{L}^+(\cH): \, \tr A=\lambda\right\}.
$$
Observe that $\mc{L}^+(\cH)_1$ equals the state space which is denoted by $\cS(\cH).$ Furthermore, observe that by the trace preserving property given in Claim \ref{psitrace}, $\psi$ restricted to $\mc{L}^+(\cH)_\lambda$ is a bijection of that set onto itself. In particular, $\psi$ restricted to $\cS(\cH)$ is a bijection from $\cS(\cH)$ onto itself.
Straightforward computations show that the $\chi_\alpha^2$-divergence is homogeneous, that is,
$$
K_\alpha\ler{\lambda A || \lambda B}=\lambda K_\alpha\ler{A || B} \qquad \ler{A, B \in \mc{L}^+(\cH), \, \lambda \in [0, \infty)}.
$$
For any $\lambda \in (0,\infty),$ let us define a map $\psi_\lambda$ in the following way:
$$
\psi_\lambda: \cS(\cH) \rightarrow \cS(\cH), \quad A \mapsto \psi_\lambda(A):=\frac{1}{\lambda}\psi\ler{\lambda A}.
$$
The map $\psi_\lambda$ satisfies \eqref{E:M3} because
$$
K_\alpha\ler{\psi_\lambda(A)||\psi_\lambda(B)}
=K_\alpha\ler{\frac{1}{\lambda}\psi\ler{\lambda A}||\frac{1}{\lambda}\psi\ler{\lambda B}}
$$
$$
=\frac{1}{\lambda} K_\alpha\ler{\psi\ler{\lambda A}||\psi\ler{\lambda B}}
=\frac{1}{\lambda}K_\alpha\ler{\lambda A||\lambda B}=K_\alpha\ler{A||B}
$$
holds for any $A \in \cS(\cH), B \in \cM(\cH)$ and $\lambda \in (0, \infty).$
Moreover, by the bijectivity and the trace-preserving property of $\psi$, the map $\psi_\lambda$ is a bijection on $\cS(\cH).$ Moreover, it acts bijectively on $\cM(\cH),$ because
$$
\psi_\lambda(A)=\frac{1}{\lambda}\psi\ler{\lambda A}=
\frac{1}{\lambda}\phi\ler{\lambda A}
$$
holds for any invertible density $A$ and $\lambda \in (0, \infty),$ and $\phi$ is a trace-preserving bijection on $\mc{L}^{++}(\cH).$
So, the results in Section \ref{2fej} apply and for any $\lambda \in (0, \infty)$ we have that there is a unitary or an antiunitary operator $U_\lambda$ on $\mc{H}$ such that
$$
\phi(\lambda A)= U_\lambda (\lambda A) U_\lambda^*
$$
holds for any nonsingular density operator $A$ on $\mc{H}$. We need to show that $U_\lambda$ does not depend essentially on the parameter $\lambda$ meaning that all $U_\lambda$'s induce the same similarity transformation.
In order to verify this, fix positive real numbers $\lambda,\mu$.
Choose $A,B\in \cM(\cH)$. We have
\[
K_\alpha(\lambda A || \mu B)=K_\alpha\ler{\phi(\lambda A)||\phi(\mu B)}= K_\alpha \ler{U_\lambda (\lambda A) U_\lambda^* || U_\mu (\mu B) U_\mu^*}
\]
from which we easily deduce that
\[
\tr B^{-\alpha} AB^{\alpha -1}A=
\tr U_\mu B^{-\alpha} U_\mu^* U_\lambda A U_\lambda^* U_\mu B^{\alpha -1}U_\mu^* U_\lambda A U_\lambda^*
\]
holds for any $A,B\in \cM(\cH)$. Denoting $V=U_\mu^*U_\lambda $ we have
\[
\tr B^{-\alpha} AB^{\alpha -1}A=
\tr B^{-\alpha} V A V^* B^{\alpha -1}V A V^*
\]
for all $A,B\in \cM(\cH)$. Fix $B \in \cM(\cH)$. Then, first for all $A\in \cM(\cH)$ and then for all $A\in \mc{L}^{++}(\mc {H})$ and finally for all $A\in \mc{L}^{+}(\mc {H})$ we have
\[
\tr B^{-\alpha} AB^{\alpha -1}A=
\tr (V^*B^{-\alpha} V) A (V^* B^{\alpha -1}V) A. 
\]
Linearizing this equality, i.e., writing $A+A'$ in the place of $A$ we infer that the equality
\[
\begin{gathered}
\tr B^{-\alpha} AB^{\alpha -1}A' + \tr B^{-\alpha} A'B^{\alpha -1}A\\=
\tr (V^*B^{-\alpha} V) A (V^* B^{\alpha -1}V) A' +  \tr (V^*B^{-\alpha} V) A' (V^* B^{\alpha -1}V) A
\end{gathered}
\]
is valid for any $A,A'\in \mc{L}^{+}(\mc {H})$. We can rewrite this in the following way:
\[
\begin{gathered}
\tr \ler{ B^{-\alpha} AB^{\alpha -1} + B^{\alpha -1}AB^{-\alpha}}A'\\=
\tr \ler{(V^*B^{-\alpha} V) A (V^* B^{\alpha -1}V) +  (V^* B^{\alpha -1}V)  A(V^*B^{-\alpha} V)} A'
\end{gathered}
\]
for any $A,A'\in \mc{L}^{+}(\mc {H})$ and then for any $A,A'\in \mc{L}(\mc {H})$, too (every operator is a linear combination of positive semidefinite ones).  It easily follows that
\[
\begin{gathered}
B^{-\alpha} AB^{\alpha -1} +  B^{\alpha -1}AB^{-\alpha}\\=
(V^*B^{-\alpha} V) A (V^* B^{\alpha -1}V) +  (V^* B^{\alpha -1}V)  A(V^*B^{-\alpha} V)
\end{gathered}
\]
holds for any $A\in \mc{L}(\mc {H})$ and $B\in \cM(\cH)$. It is easy to see that, plugging $B^{-1}/\tr B^{-1}$ into the place of $B$, we next have
\[
B^{\alpha} AB^{1-\alpha } +  B^{1-\alpha}AB^{\alpha}=
(V^*B^{\alpha} V) A (V^* B^{1-\alpha}V) +  (V^* B^{1-\alpha }V)  A(V^*B^{\alpha} V)
\]
for any $A\in \mc{L}(\mc {H})$ and for any $B\in \mc{L}^{++}(\mc {H})$ and then for any $B\in \mc{L}^{+}(\mc {H})$, too.
Assume $0<\alpha<1$. Then it follows that for any projection $P$ on $\mc{H}$ we have
\[
2PAP=2(V^*PV) A(V^*PV) \qquad ( A\in \mc{L}(\mc {H})).
\]
This easily implies that $P=V^*PV$ for all projections $P$ on $\mc{H}$. This further gives that $V$ equals the identity multiplied by a complex number of modulus 1. It follows that $U_\lambda, U_\mu$ are linearly dependent for any $\lambda,\mu$ and hence they induce the same unitary or antiunitary similarity transformation. The argument is similar but simpler in the case where $\alpha$ is either 0 or 1. Consequently, we have a unitary or antiunitary operator $U$ on $\mc{H}$ such that
\[
\phi(A)=UAU^*\qquad ( A\in \mc{L}^{++}(\mc {H})).
\]
This completes the proof of our main result Theorem \ref{fo}.

%%%%%%%%%%%%%%%%%%%%%%%%%%%%%%%%%%%%%%%%%%%%5%%%%%%%%%%%%%
\subsection{The sketches of the proofs of Propositions \ref{semidef} and \ref{density}} \label{4fej}

This subsection is devoted to give the sketches of the proofs of our results concerning bijective maps preserving the $\chi^2_\alpha$-divergence on the cone of all positive semidefinite operators or on the state space.

First, we consider Proposition \ref{semidef}. Let $\phi$ be 
the map given there.
We observe that a positive semidefinite operator $B$ on $\mc{H}$ is nonsingular if and only if we have $K_\alpha(A||B) <\infty $ for every $A\in\mc{L}^{+}(\cH)$.
Thus we infer that
$$
\phi\ler{\mc{L}^{++}(\cH)} = \mc{L}^{++}(\cH).
$$
Clearly, the restriction $\phi_{|\mc{L}^{++}(\cH)}$ satisfies the conditions of Theorem \ref{fo}, hence we have a unitary or an antiunitary operator $U: \cH \rightarrow \cH$ such that 
$$
\phi(A)=UAU^* \qquad \ler{A \in \mc{L}^{++}(\cH)}.
$$
Next, we observe that using the same argument as in the proof of Claim \ref{kiterj}, we can show that for $A\in\mc{L}^{+}(\cH)$, $\{A_n\}_{n=1}^\infty \subset \mc{L}^{++}(\cH)$ with $\lim_{n\to\infty} A_n = A$, we have that $\{\phi(A_n)\}_{n\in \mathbb N}$ is convergent. Let $L=\lim_{n\to\infty}\phi(A_n)$. To see that $L=\phi(A)$, for any $B\in \mc{L}^{++}(\mc{H})$ we compute
\begin{equation*}
\begin{gathered}
K_\alpha(L || \phi(B))=\lim_{n\to\infty} K_\alpha (\phi(A_n)|| \phi(B))\\=\lim_{n\to\infty} K_\alpha(A_n||B)=K_\alpha( A||B)=K_\alpha (\phi(A)||\phi(B))
\end{gathered}
\end{equation*}
which, by Lemma \ref{L:M}, implies that $\phi(A)=L$. 
(Note that this does not give the continuity of $\phi$ on $\mc{L}^{+}(\cH)$.)
Hence we obtain
$$
\phi(A)=\lim_{n\to\infty} \phi(A_n)= \lim_{n\to\infty} UA_nU^*=UAU^* 
$$
for any $A \in \mc{L}^+(\cH)$
with some given unitary or antiunitary operator $U$ on $\mc{H}$. This proves the statement in Proposition \ref{semidef}.

As for Proposition \ref{density}, let $\phi$ be the map given there. As above, we can easily deduce that
$$
\phi(\cM(\cH)) = \cM(\cH)
$$
and that $\phi$ satisfies \eqref{E:M3}.
Therefore, by the results of Subsection \ref{2fej}, we have a unitary or an antiunitary transformation $U: \cH \rightarrow \cH$ with
$$
\phi(A)=UAU^* \qquad \ler{A \in \cM(\cH)}.
$$
The proof can now be completed in a way very similar to the last part of the proof of Proposition \ref{semidef}.

\section{Conclusion, open problems}

Above we have proven that any bijective map on any of the convex sets $\mc{L}^{++}(\cH), \mc{L}^{+}(\cH),  \mc{S}(\cH)$ which preserve the $\chi^2_\alpha$-divergence is a unitary or an antiunitary similarity transformation. This gives the somewhat surprising conclusion that although the quantity $K_\alpha(.||.)$ is highly nonlinear in its variables, the bijective maps which preserve it are linear, more accurately, affine automorphisms of the underlying convex sets.

We finish the paper with two very natural and exciting questions to which we do not have answers and hence we leave them as open problems. First, we ask if the bijectivity assumptions in our results above can be relaxed. Second, what is the structure of those bijective maps on the sets $\mc{L}^{++}(\cH)$, $\mc{L}^+(\cH)$, $\mc{S}(\cH)$ which preserve a general $\chi^2$-divergence given in (10) on page 122201-3 in \cite{Ruskaietal}. The arguments we have presented in this paper may convince the reader that those questions are most probably difficult and hence rather challenging and may therefore be the targets of further investigations.

\end{document}